\definecolor{newblue}{rgb}{0.2,0.2,0.6} 
\newcommand{\uhat}{\widehat u}
\newcommand{\trace}{ \mathrm{tr}}
\newcommand{\seta}{\mathcal{A}}
\newcommand{\setb}{\mathcal{B}}
\newtheorem{theorem}{Theorem}
\newtheorem{lemma}{Lemma}[section]
\newtheorem{remark}{Remark}
\DeclareMathOperator*{\argmin}{arg\,min}        
\DeclareMathOperator*{\argmax}{arg\,max}        
\newcommand{\norm}[1]{\left\| #1\right\|}                  
\newcommand{\identity}{\mathbb{I}}
\newcommand{\I}{\identity}
\newcommand{\R}{\mathbb{R}}
\newcommand{\C}{\mathbb{C}}
\newcommand{\abs}[1]{\left\lvert#1\right\rvert}
\newcommand{\cardinality}[1]{\abs{#1}}
\newcommand{\poly}[1]{\mathrm{poly}\!\left(#1\right)}
\definecolor{indiagreen}{rgb}{0.07, 0.53, 0.03}
\title{Polynomial-Time Algorithms for Weaver's Discrepancy Problem \\in a Dense Regime}
\author{
Ben Jourdan\footnote{School of Informatics,    University of Edinburgh, UK. \url{ben.jourdan@ed.ac.uk}}
\and 
Peter Macgregor\footnote{School of Informatics,    University of Edinburgh, UK. \url{peter.macgregor@ed.ac.uk}. }
\and He Sun\footnote{School of Informatics, University of Edinburgh, UK. \url{h.sun@ed.ac.uk}. }  }
\begin{document}

\maketitle

\thispagestyle{empty}
\setcounter{page}{0}

\begin{abstract}

Given $v_1,\ldots, v_m\in\mathbb{C}^d$ with $\|v_i\|^2= \alpha$ for all $i\in[m]$ as input and 
suppose
$
\sum_{i=1}^m | \langle u, v_i \rangle |^2 = 1
$
for every unit vector  $u\in\mathbb{C}^d$, Weaver's discrepancy problem asks for a partition $S_1, S_2$ of $[m]$, such that $
\sum_{i\in S_{j}} |\langle u, v_i \rangle|^2 \leq 1 -\theta$
for some universal constant $\theta$,
every unit vector $u\in\mathbb{C}^d$ and every $j\in\{1,2\}$. We prove that this problem can be solved   
deterministically in   polynomial time when $m\geq  49 d^2$. 

\end{abstract}

\newpage

\section{Introduction}

The   Kadison-Singer problem is a central problem in operator theory, and has close connections to a number of problems in quantum mechanics, pure and applied mathematics, 
engineering, and computer science.  Among several formulations of the Kadison-Singer problem,  Weaver~\cite{dm/Weaver04} shows
that it is
equivalent to the following discrepancy problem~(the $\mathsf{KS}_2$ problem): 
there  exists a universal constant  $\theta>0$ such that the following holds. Let $v_1,\ldots, v_m\in\mathbb{C}^d$ satisfy $\|v_i\|^2= \alpha$ for all $i\in[m]$, and suppose
$
\sum_{i=1}^m | \langle u, v_i \rangle |^2 = 1
$
for every unit vector $u\in\mathbb{C}^d$. Then, there exists a partition $S_1, S_2$ of $[m]$, such that $
\sum_{i\in S_{j}} |\langle u, v_i \rangle|^2 \leq 1 -\theta$
for every unit vector $u\in\mathbb{C}^d$ and every $j\in\{1,2\}$.

In their breakthrough result, Marcus, Spielman, and Srivastava~\cite{MSS} present a non-constructive proof and show that the partition promised by the  $\mathsf{KS}_2$ problem exists, leading  to an affirmative answer to the Kadison-Singer problem. On the other side, due  to a close connection between the $\mathsf{KS}_2$ problem and many algorithmic problems~(e.g.,  constructing unweighted spectral
sparsifiers,  and spectrally thin trees), 
they ask 
whether such a  partition of the $\mathsf{KS}_2$ problem can be found in polynomial time.
After a decade-long research~(e.g., \cite{conf/soda/AnariGSS18,JMS23,SZ22}), this problem remains wide
 open and has become a very important open problem in algorithmic spectral graph theory.

This paper studies  the   $\mathsf{KS}_2$ problem in the regime of $m\geq 221d^2$. Under this condition, we present two deterministic polynomial-time algorithms that solve the  $\mathsf{KS}_2$ problem. These two algorithms achieve the same approximation guarantee, and their performance is 
summarised as follows:

\begin{theorem}
\label{theorem:alg1}
Let $\mathcal{I}=\{v_i\}_{i=1}^m $ be vectors in  $\mathbb{C}^d$, such that $m$ is even and 
$
\sum_{i=1}^m | \langle u, v_i \rangle |^2 = 1
$
for every unit vector $u \in \mathbb{C}^d$. Moreover, assume that
$\| v_i\|^2=\alpha$ for every $i\in[m] $ and $m\geq 221 d^2$. Then, there is a deterministic  algorithm that finds a partition $S_1, S_2$ of $[m]$
such that  $|S_1|=|S_2|$ and 
\[
\sum_{i\in S_j} |\langle u, v_i\rangle|^2 \leq \frac{3}{4}
\]
 for every unit vector $u \in \mathbb{C}^d$ and every $j\in\{1,2\}$.
The algorithm runs in time $O(\poly{m, d})$.
\end{theorem}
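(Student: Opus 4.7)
The plan is to reformulate the problem as finding a balanced $\pm 1$ sign assignment whose associated signed sum of rank-one matrices has small operator norm, and then to produce such an assignment deterministically by the method of conditional expectations with a Frobenius-norm pessimistic estimator. Since the isotropy hypothesis gives $\sum_{i=1}^m v_i v_i^* = I$, writing $X = \sum_{i \in S_1} v_i v_i^* - \sum_{i \in S_2} v_i v_i^*$ yields $\sum_{i \in S_j} v_i v_i^* = \tfrac{1}{2}(I \pm X)$, so the desired bound $\sum_{i \in S_j} |\langle u, v_i\rangle|^2 \leq 3/4$ for every unit $u$ is equivalent to $\|X\| \leq 1/2$. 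Parametrising partitions by signs $\epsilon \in \{-1,+1\}^m$, balance $|S_1|=|S_2|$ corresponds to $\sum_i \epsilon_i = 0$, and it suffices to exhibit such a balanced $\epsilon$ with $\bigl\|\sum_i \epsilon_i v_i v_i^*\bigr\| \leq 1/2$.

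Next, I would relax the operator norm to the Frobenius norm, using $\|X\| \leq \|X\|_F$; the point is that $\|X\|_F^2 = \sum_{i,j} \epsilon_i \epsilon_j |\langle v_i, v_j\rangle|^2$ is a quadratic form in the signs, hence tractable. To enforce balance cleanly, I would \emph{pair up} the vectors into $m/2$ arbitrary pairs $(v_{2i-1}, v_{2i})$ and use one independent sign per pair, which deterministically forces $|S_1| = |S_2| = m/2$. Writing $M_i = v_{2i-1} v_{2i-1}^* - v_{2i} v_{2i}^*$ and sampling the pair-signs uniformly, the cross terms vanish in expectation and one computes
\[
\E\bigl[\|X\|_F^2\bigr] \;=\; \sum_{i=1}^{m/2} \|M_i\|_F^2 \;\leq\; m \alpha^2 \;=\; \frac{d^2}{m} \;\leq\; \frac{1}{221},
\]
where I used $\|v_i\|^2 = \alpha$ together with $m\alpha = \tr(I) = d$. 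This is comfortably below $1/4 = (1/2)^2$.

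To derandomise, I would apply the method of conditional expectations over the $m/2$ pair-signs. After fixing any prefix of pair-signs, the conditional expectation of $\|X\|_F^2$ is still a quadratic polynomial in the unfixed signs (the random ones contribute only diagonal terms $\|M_i\|_F^2$), and so it can be evaluated in $O(\poly{m,d})$ time. Greedily choosing each next pair-sign to minimise this conditional expectation produces a deterministic balanced assignment with $\|X\|_F^2 \leq d^2/m \leq 1/221$, and hence $\|X\| \leq 1/\sqrt{221} < 1/2$, which is in fact strictly stronger than the $3/4$ bound required.

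The main obstacle is the interaction of the balance constraint with the derandomisation: naively applying conditional expectations to independent $\pm 1$ signs destroys the $|S_1|=|S_2|$ condition, while conditioning on $\sum_i \epsilon_i = 0$ couples all coordinates and complicates the pessimistic estimator. The pairing trick resolves this by decoupling balance from optimisation, at essentially no cost in the bound. An alternative route would be a matrix-Chernoff-style pessimistic estimator tracking $\tr\exp(\pm t X)$, which would in principle yield a tighter $m \gtrsim d \log d$ threshold, but the stated regime $m \geq 221 d^2$ and the simplicity of the claimed algorithm strongly suggest the Frobenius-norm route above.
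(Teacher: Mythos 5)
Your proposal is correct, but it takes a genuinely different route from the paper. You recast the task as a signed discrepancy problem, enforce balance by pairing the vectors and using one sign per pair, bound $\mathbb{E}\,\|X\|_F^2=\sum_i\|M_i\|_F^2\le m\alpha^2=d^2/m\le 1/221$, derandomise by conditional expectations (the conditional expectation $\|\sum_{i\le k}\sigma_iM_i\|_F^2+\sum_{i>k}\|M_i\|_F^2$ is exactly computable, so the greedy sign choice never increases it), and finish with $\|X\|\le\|X\|_F<1/2$; every step checks out, the bound $\tfrac12(1+1/\sqrt{221})<3/4$ is even stronger than required, and in fact your argument already works once $m>4d^2$. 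The paper instead runs a greedy vector-selection algorithm analysed either through the log-determinant potential $\mathrm{tr}\log(u\mathbb{I}-A)^{-1}$ with a moving barrier, or by directly lower-bounding $\det(\mathbb{I}-A_{m/2})$, and converts the determinant bound into an eigenvalue bound via a condition-number inequality. What each approach buys: yours is far more elementary, trivially polynomial-time, and quantitatively better in this regime; but it is precisely the ``random partition plus derandomisation'' paradigm the paper explicitly positions itself against (its claimed novelty is a deterministic method not built on derandomising a random sampling argument), and the Frobenius relaxation is intrinsically stuck at $m=\Omega(d^2)$ since $\|X\|_F\approx d/\sqrt{m}$ cannot certify anything sparser, whereas the paper's potential-function/determinant-maximisation machinery --- choosing each vector exactly once and without reweighting --- is advertised as a technique that might extend toward the genuinely hard regime $m=\Theta(d)$ of the $\mathsf{KS}_2$ problem and also yields the connection to determinant maximisation discussed in Section~\ref{sec:second}.
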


\begin{remark}
Our second algorithm   requires $m\geq 49d^2$ instead. Both constants might be further optimised.
\end{remark}

\begin{remark}
Our presented algorithms can be easily adjusted to work when $m$ is odd. In this case,  $|S_1|\neq|S_2|$ but the other stated   properties from Theorem~\ref{theorem:alg1} hold.
\end{remark}

We discuss the significance of the result. First of all, our result shows that, when $m\geq 49 d^2$,  the $\mathsf{KS}_2$ problem can be solved by a 
deterministic  and  polynomial-time algorithm.
To the best of our knowledge, this  presents the first such algorithm which does not rely on random sampling and de-randomization techniques~\cite{wigderson2008derandomizing}.
Secondly, our work draws a novel connection between the $\mathsf{KS}_2$ problem and the determinant optimisation problem, and  demonstrates  how potential functions can be used for   the $\mathsf{KS}_2$ problem.  
Potential functions of various forms are common tools used in constructing spectral sparsifiers~\cite{BSS,LS17,LS18}; however, all of the previous works 
(i) might choose the same vector multiple times, and 
(ii) need to reweight the chosen vector.
Hence, the previous analyses cannot be directly applied for the 
$\mathsf{KS}_2$ problem. In our point of view, overcoming these two ``bottlenecks'' 
is significant, and our presented technique could motivate more research on this important problem.

\subsection{Overview of our Techniques}

\paragraph{The First Algorithm.}
At a very high level, our algorithm proceeds in $m/2$ iterations, and picks one vector in each iteration. Formally, starting with
$A_0= \mathbf{0}_{d\times d}$ and $\seta_0=\emptyset$,
the algorithm picks some vector $v\in \mathcal{I} \setminus \seta_{j}$ in iteration $j$, and adds it into $\seta_{j+1}$, i.e., $\seta_{j+ 1}
\triangleq \seta_{j} \bigcup \{v\}$; the algorithm also sets $A_{j+1} = A_j + vv^*$.
Despite the similarity, it is important to notice the difference between our framework and the   BSS one~\cite{BSS}: first of all,  in each iteration our algorithm only picks a vector that hasn't been chosen yet, while the BSS algorithm might pick the same vector multiple times.  Secondly,
our algorithm doesn't scale the chosen vector, while the BSS algorithm scales the chosen vector in each iteration. Hence, during the execution of our algorithm, the set of the chosen vectors and the set of the remaining ones always form a partition of $\mathcal{I}$.

The analysis of our algorithm is based on a novel potential function  defined by
\[
\Phi^{u} (A) \triangleq \trace\log (u\I - A)^{-1}
\]
for some positive semi-definite matrix $A$ and barrier value $u\in\mathbb{R}^+$. Since $\trace\log (A) = \log\det (A)$ holds for any positive definite matrix $A$, one can rewrite $\Phi^u (A)$ as 
\begin{equation}\label{eq:potential_rewrite}
\Phi^{u} (A)  = - \log \det (u\I - A) = -\log \left(\prod_{i=1}^d \lambda_i(u\I-A)\right),
\end{equation}
i.e., the potential function $\Phi^u(A)$ is a function of the determinant of the matrix $u\I-A$. Our objective is to apply this potential function to keep track of the algorithm's progress in each iteration. To achieve this, we set the  initial barrier value as $u_0\triangleq 1/2$, which increases by $\delta_u$ after each iteration. That is, $u_{j+1} \triangleq u_j + \delta_u$, and we set $\delta_u\triangleq \alpha/d$.

Next, we reason about the use of $\Phi^u(A)$ and our choices of the barrier values.  As we set $\delta_u = \alpha/d$, it holds that
 \begin{equation}\label{eq:invariant}
 \trace(u_{j}\I - A_{j}) = \left(u_0 + \frac{\alpha}{d}\cdot j\right)\cdot d - \alpha\cdot j =\trace(u_0\cdot\I),
 \end{equation}
hence the trace of  $u_j\I - A_j$ remains  constant during the execution of the algorithm. On the other side, by the AM-GM inequality we have  that
\begin{equation}\label{eq:upperbound}
\det\left(u_j\I- A_j\right)^{1/d} \leq \frac{1}{d}\cdot\trace\left( u_j\I - A_j\right).
\end{equation}
Combining \eqref{eq:invariant} with \eqref{eq:upperbound}, we know that the potential function $\Phi^{u_j}(A_j)$ has the same upper bound for all the iterations; moreover,  the closer $\Phi^{u_j}(A_j)$ is to this upper bound,  the more  balanced are the eigenvalues of $u_j\I- A_j$. Hence, in every iteration $j$ the algorithm picks the vector $v\in\mathcal{I}\setminus \seta_j$ that maximises $\det\left(u_{j+1}I - A_j- vv^*\right)$.
We show that after $m/2$ iterations the left and right sides of \eqref{eq:upperbound} are sufficiently close. Due to a tight bound of the condition number of any Hermitian positive definite matrix $B \in \C^{d \times d}$ based on $\det(B)$, $\trace(B)$ and $d$~\cite{condition}, we show that the algorithm finds the partition promised by the $\mathsf{KS}_2$ problem.

\paragraph{The Second Algorithm.} Our second algorithm proceeds  in $m/2$ iterations as well, and picks one vector in each iteration. Formally,  starting with $A_0=\mathbf{0}_{d\times d}$ and $\seta_0=\emptyset$, the algorithm picks some vector $v\in \mathcal{I} \setminus \seta_{j}$ in iteration $j$, and adds it into $\seta_{j+1}$; the algorithm  sets $A_{j+1} = A_j +vv^*$. However, in each iteration $j$ the algorithm picks $v$ that minimises $v^*(\I - A_j)^{-1}v$. Instead of applying a potential function, we analyse the second algorithm by directly lower bounding $\det(\I - A_{m/2})$. We prove that such a lower bound of $\det(\I - A_{m/2})$ is sufficient to bound the condition number of $A_{m/2}$ and $\I - A_{m/2}$, implying that our algorithm's output is the partition promised by the  $\mathsf{KS}_2$ problem.

\subsection{Related Work}

Our result directly relates to the quest for efficient algorithms for the the  $\mathsf{KS}_2$ problem.  Anari et al.~\cite{conf/soda/AnariGSS18} show that a valid
partition promised by the  $\mathsf{KS}_2$ problem can be found in
$d^{O(m^{1/3}\alpha^{-1/4})}$ time.
Jourdan et al.~\cite{JMS23} present a randomised algorithm for the problem, which runs in time quasi-polynomial in $m$ and exponential in $d$. Compared with their results, we show that the $\mathsf{KS}_2$ problem  can be solved in polynomial time when $m\geq 49 d^2$. Weaver~\cite{weaver2} gives a polynomial-time algorithm for a one-sided version of the $\mathsf{KS}_2$ problem based on the BSS framework for spectral sparsification~\cite{BSS}.
In comparison, we give a two-sided guarantee in a more restricted setting.

Our work is also linked to hardness results for the $\mathsf{KS}_2$ problem~\cite{SZ22, JMS23}.
These results show that the optimisation version of the $\mathsf{KS}_2$ problem is $\mathsf{NP}$-hard.
It's worth mentioning  that the two reductions 
shown in \cite{SZ22,JMS23} are based on instances satisfying $m=\Theta(d)$.

Our approach further relates to the determinant maximisation problem which has been extensively studied in the field of experimental design~\cite{brown2022determinant, brown2024maximizing, SX-determinant}.
Compared with the state-of-the-art determinant maximisation algorithm, our approach can be applied to a wider range of instances.
We further discuss this connection in Section~\ref{sec:second}.

\section{Preliminaries\label{sec:pre}}

Let $\mathcal{I} \triangleq \{v_1,\dots, v_m \}$ be the set of input vectors, where $\| v_i\|^2=\alpha$
and 
$v_i\in \mathbb{C}^d$ for every $1\leq i \leq m$. We  use $\seta_j$ to represent the set of vectors picked till the $j$th iteration, and let $\setb_j \triangleq \mathcal{I} \setminus \seta_j$ be the set of unpicked vectors; hence, it holds that $|\seta_j| = j$ and $|\setb_j| = m-j$. We always use $A_j$ to represent the matrix constructed in the $j$th iteration, i.e., $A_j \triangleq \sum_{v\in \seta_j} vv^{*}$. We write the eigenvalues of any Hermitian matrix $A\in\mathbb{C}^{d\times d}$ as $\lambda_{\max}(A)=\lambda_1(A)\geq\ldots\geq \lambda_d(A)=\lambda_{\min}(A)$, and the condition number of $A$ as 
\[
\kappa(A)\triangleq \frac{\lambda_1(A)}{\lambda_d(A) }.
\]
The following relationship between $\alpha,d$ and $m$ will be repeatedly used in our analyses.

\begin{lemma}\label{lem:amd}
It holds that $m\cdot \alpha=d$.
\end{lemma}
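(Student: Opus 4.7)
The plan is to translate the isotropy hypothesis into a matrix identity and then take traces. Specifically, the assumption that $\sum_{i=1}^m |\langle u, v_i\rangle|^2 = 1$ holds for every unit vector $u \in \mathbb{C}^d$ says exactly that the quadratic form $u^* \bigl(\sum_{i=1}^m v_i v_i^*\bigr) u$ equals $1 = u^* \I u$ on the unit sphere. Since two Hermitian matrices that agree as quadratic forms on all unit vectors are equal, this gives the operator identity
\[
\sum_{i=1}^m v_i v_i^* = \I_d.
\]

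Next, I would take traces of both sides. On the right, $\trace(\I_d) = d$. On the left, by linearity and the cyclic property of the trace, $\trace(v_i v_i^*) = \trace(v_i^* v_i) = \|v_i\|^2 = \alpha$ for every $i \in [m]$, so $\trace\bigl(\sum_i v_i v_i^*\bigr) = m\alpha$. Equating the two sides yields $m\alpha = d$, as claimed.

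There is no real obstacle here: the only conceptual step is recognising that the scalar identity on all unit vectors forces the underlying matrix identity, after which the claim is a one-line trace computation.
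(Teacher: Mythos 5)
Your proof is correct and is essentially the paper's argument in matrix language: the paper sums the isotropy condition over the standard orthonormal basis and swaps the order of summation, which is exactly the trace computation $\trace\bigl(\sum_i v_i v_i^*\bigr) = \trace(\I_d)$ that you perform after first stating the operator identity $\sum_i v_i v_i^* = \I_d$. The only difference is presentational (you invoke polarization to get the matrix identity, while the paper only needs the hypothesis at the $d$ basis vectors), and both routes are equally valid.
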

\begin{proof}
Let $e_1,\ldots, e_d$ be the vectors forming the standard orthonormal basis, and we have that
\[
m\cdot\alpha = \sum_{i=1}^m \|v_i\|^2 = 
\sum_{i=1}^m \sum_{j=1}^d \langle v_i, e_j\rangle^2 =  \sum_{j=1}^d \sum_{i=1}^m\langle v_i, e_j\rangle^2 =d, 
\]
which proves the statement.
\end{proof}

Next we list the facts on matrices that will be used in our analyses.

\begin{lemma}[Theorem~3.30, \cite{alma9924127819402466}]\label{lem:det_product}
It holds for any square matrices $A, B$ of the same size that  $\det(AB) = \det(A)\cdot \det(B)$.
\end{lemma}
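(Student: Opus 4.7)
The plan is to prove this standard multiplicativity of the determinant by splitting into the singular and invertible cases.

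\textbf{Case 1: $A$ is singular.} In this case the columns of $A$ are linearly dependent, so every column of $AB$, being a linear combination of the columns of $A$, is also linearly dependent on the others. Hence $AB$ is singular and $\det(AB) = 0 = \det(A)\cdot \det(B)$.

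\textbf{Case 2: $A$ is invertible.} Here I would exploit the characterisation of the determinant as the unique alternating multilinear function $D: (\mathbb{C}^d)^d \to \mathbb{C}$ of column vectors normalised so that $D(e_1,\ldots,e_d) = 1$. Fixing $A$, define
\[
f(B) \triangleq \frac{\det(AB)}{\det(A)}.
\]
Writing $B = [\,b_1 \mid \cdots \mid b_d\,]$ we have $AB = [\,Ab_1 \mid \cdots \mid Ab_d\,]$. Since $b \mapsto Ab$ is linear in $b$ and the determinant is alternating multilinear in its column arguments, it follows that $f$ is alternating multilinear in $(b_1,\ldots,b_d)$. Moreover $f(I) = \det(A)/\det(A) = 1$, so by uniqueness $f(B) = \det(B)$, which yields $\det(AB) = \det(A)\cdot\det(B)$.

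The main obstacle is invoking the uniqueness characterisation, which itself requires justification. To keep the argument self-contained, one can instead work directly with the Leibniz expansion $\det(M) = \sum_{\sigma \in S_d} \sgn(\sigma)\prod_{i=1}^d M_{i,\sigma(i)}$, substitute $M = AB$ so that $M_{i,\sigma(i)} = \sum_{k_i} A_{i,k_i} B_{k_i,\sigma(i)}$, and interchange sums. Grouping the terms by the map $i \mapsto k_i$, every non-injective choice contributes zero by the alternating signs of $\sigma$, while each injective choice corresponds to a permutation $\tau$ and contributes $\sgn(\tau)\prod_i A_{i,\tau(i)}$ times the inner sum $\sum_\sigma \sgn(\sigma\tau^{-1})\prod_i B_{\tau(i),\sigma(i)} = \det(B)$, so the total equals $\det(A)\cdot \det(B)$.
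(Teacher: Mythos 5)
The paper does not prove this lemma at all: it simply cites it as Theorem~3.30 of a linear-algebra textbook, so there is no authorial argument to compare against. Your proof is a correct and standard one. Two small remarks. First, your two-case split is only needed for the uniqueness-characterisation route, since $f(B)=\det(AB)/\det(A)$ is undefined when $A$ is singular; the Leibniz-expansion argument you sketch at the end handles both cases uniformly, so if you go that way the case analysis can be dropped entirely. Second, in Case~1 the phrase ``every column of $AB$ \ldots is also linearly dependent on the others'' is slightly loose: what you mean is that the columns of $AB$ all lie in the column space of $A$, which has dimension $<d$, hence the columns of $AB$ are linearly dependent as a set. With that wording tightened, the argument is fine.
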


\begin{lemma} \label{lem:det_vs_tr}
It holds for any Hermitian positive definite matrix $A\in\mathbb{C}^{d\times d}$ that
\[
\det(A)^{1/d} \leq \frac{1}{d}\cdot\trace(A).
\]
\end{lemma}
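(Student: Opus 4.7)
The plan is to reduce the matrix inequality to the scalar AM-GM inequality by passing to the eigenvalues of $A$. Since $A$ is Hermitian, it is unitarily diagonalisable, so there exist a unitary $U \in \mathbb{C}^{d\times d}$ and real eigenvalues $\lambda_1,\ldots,\lambda_d$ with $A = U \,\mathrm{diag}(\lambda_1,\ldots,\lambda_d)\, U^*$. Positive definiteness gives $\lambda_i > 0$ for every $i \in [d]$.

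Next I would rewrite both sides of the target inequality in terms of the $\lambda_i$. Using $\det(U U^*) = 1$ (or alternatively Lemma~\ref{lem:det_product}), one gets $\det(A) = \prod_{i=1}^d \lambda_i$, and from the cyclic invariance of the trace one gets $\trace(A) = \sum_{i=1}^d \lambda_i$. Hence the desired inequality becomes
\[
\left(\prod_{i=1}^d \lambda_i\right)^{1/d} \leq \frac{1}{d}\sum_{i=1}^d \lambda_i,
\]
which is exactly the AM-GM inequality applied to the positive reals $\lambda_1,\ldots,\lambda_d$.

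The final step is to invoke AM-GM to conclude. There is no real obstacle here; the only point that needs care is to make sure the spectral decomposition is explicitly justified (so that $\det$ and $\trace$ can be written purely in terms of eigenvalues) and that positive definiteness is used precisely where AM-GM requires nonnegativity of the arguments (and positivity to make the $d$-th root well-defined).
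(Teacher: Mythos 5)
Your proof is correct and takes the same route as the paper: rewrite $\det(A)$ and $\trace(A)$ as the product and sum of the eigenvalues, then apply AM-GM to the (positive) eigenvalues. You simply spell out the spectral decomposition more explicitly than the paper does, which states these two identities directly.
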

\begin{proof}
Since we can write $\det(A)=\prod_{i=1}^d \lambda_i(A)$ and $\trace(A) = \sum_{i=1}^d \lambda_i(A)$, the statement holds by the AM-GM inequality.
\end{proof}

\begin{lemma}[Matrix Determinant Lemma, \cite{DING20071223}]\label{lem:MDL}
If $A$ is an invertible matrix and $u,v$ are vectors, then 
\[
\det \left(A+ uv^{*}\right) = \left(1 + v^{*} A^{-1} u\right)\cdot \det(A).
\]
\end{lemma}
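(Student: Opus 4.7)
The plan is to reduce the statement to the special case $A = I$ and then evaluate a single block determinant in two ways. First, since $A$ is invertible, I would factor using Lemma~\ref{lem:det_product}:
\[
\det(A + uv^{*}) = \det\!\bigl(A(I + A^{-1}uv^{*})\bigr) = \det(A)\cdot \det(I + A^{-1}u\, v^{*}).
\]
So it suffices to prove the rank-one identity $\det(I + xy^{*}) = 1 + y^{*}x$ for arbitrary vectors $x,y\in\mathbb{C}^d$, and then specialise to $x = A^{-1}u$, $y = v$.

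To prove the rank-one identity, I would introduce the $(d+1)\times(d+1)$ block matrix
\[
M \triangleq \begin{pmatrix} I_d & -x \\ y^{*} & 1 \end{pmatrix}
\]
and compute $\det(M)$ in two ways via Schur complements. Using the $(2,2)$-block (which is the scalar $1$) as pivot gives $\det(M) = 1\cdot \det\!\bigl(I_d - (-x)(1)^{-1}y^{*}\bigr) = \det(I + xy^{*})$. Using the $(1,1)$-block (which is $I_d$) as pivot gives $\det(M) = \det(I_d)\cdot\det\!\bigl(1 - y^{*}I_d^{-1}(-x)\bigr) = 1 + y^{*}x$. Equating these two expressions yields the desired identity, and combining with the factorisation above gives the lemma.

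There isn't really a serious obstacle here: this is a classical identity whose only subtlety is setting up the block matrix and invoking the Schur complement formula in the correct order. If one prefers to avoid appealing to Schur complements, an alternative route is to verify the identity first when $y^{*}x \neq -1$ by observing that $x$ (if nonzero) is an eigenvector of $I + xy^{*}$ with eigenvalue $1 + y^{*}x$, while any vector orthogonal to $y$ is an eigenvector with eigenvalue $1$, so the eigenvalues of $I + xy^{*}$ are $1 + y^{*}x$ with multiplicity one and $1$ with multiplicity $d-1$, yielding $\det(I + xy^{*}) = 1 + y^{*}x$; the remaining case then follows by continuity in $x$.
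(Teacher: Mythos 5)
The paper states this as a cited result (from \cite{DING20071223}) and does not include a proof, so there is no ``paper proof'' to compare against. Your argument is correct and is one of the standard derivations: the reduction $\det(A+uv^*)=\det(A)\det(I+A^{-1}uv^*)$ via Lemma~\ref{lem:det_product}, and then the rank-one identity $\det(I+xy^*)=1+y^*x$ obtained by computing the determinant of the block matrix $\begin{pmatrix} I_d & -x \\ y^* & 1\end{pmatrix}$ by both Schur complements. The eigenvalue-counting fallback you mention is also sound (one should note the case $x=0$, which is immediate, before invoking the eigenvector argument). Either route would serve as a self-contained proof if the paper chose to include one rather than cite.
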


\begin{lemma}
    \label{lem:matrixlogrules}
    Let $A, B$ be 
   positive definite matrices. Then, the following statements hold: 
    \begin{enumerate}
        \item $\trace\log(A)=\log\det(A)$;
        \item $\trace \log(AB) = \trace\log(A) + \trace \log(B)$;
        \item if $A$ and $B$ commute, then $\log(AB)=\log(A)+\log(B)$;
        \item $\log(A^{-1})=-\log(A)$.
    \end{enumerate}
\end{lemma}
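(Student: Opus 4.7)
The plan is to reduce each of the four identities to scalar statements about eigenvalues via the spectral theorem. For any Hermitian positive definite $A \in \mathbb{C}^{d\times d}$, I would write $A = U\Lambda U^*$ with $U$ unitary and $\Lambda = \mathrm{diag}(\lambda_1,\dots,\lambda_d)$, $\lambda_i > 0$; the matrix logarithm is then defined by $\log(A) = U (\log\Lambda) U^*$ where $\log\Lambda$ has diagonal entries $\log\lambda_i$. All four items will follow by unpacking this definition and combining it with routine facts about traces and determinants.

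For part~(1), cyclicity of the trace gives $\trace\log(A) = \trace(\log\Lambda) = \sum_{i=1}^d \log\lambda_i = \log\prod_{i=1}^d \lambda_i = \log\det(A)$. For part~(3), commuting Hermitian positive definite matrices are simultaneously unitarily diagonalisable, so I would take $A = U\Lambda_A U^*$ and $B = U\Lambda_B U^*$ with a common $U$; then $\log(AB) = U\log(\Lambda_A\Lambda_B)U^*$, and because $\Lambda_A,\Lambda_B$ are diagonal with positive entries, $\log(\Lambda_A\Lambda_B) = \log\Lambda_A + \log\Lambda_B$ entrywise, which yields $\log(AB) = \log(A) + \log(B)$. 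Part~(4) is then an immediate corollary: $A$ and $A^{-1}$ commute and multiply to $\I$, so part~(3) gives $\log(A) + \log(A^{-1}) = \log(\I) = 0$.

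The only point of genuine care is part~(2), since $AB$ need not be Hermitian, and so $\log(AB)$ must be interpreted through the holomorphic functional calculus on a matrix whose spectrum avoids the non-positive real axis. I would resolve this by noting that $AB$ is similar to the Hermitian positive definite matrix $A^{1/2}BA^{1/2}$, and therefore has $d$ real positive eigenvalues $\mu_1,\dots,\mu_d$. Thus $\trace\log(AB) = \sum_{i=1}^d \log\mu_i = \log\det(AB)$, reducing part~(2) to part~(1): by Lemma~\ref{lem:det_product}, $\log\det(AB) = \log\det(A) + \log\det(B)$, which by part~(1) equals $\trace\log(A) + \trace\log(B)$. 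Beyond this one definitional subtlety I do not foresee any obstacle; every part reduces to bookkeeping on eigenvalues after diagonalisation.
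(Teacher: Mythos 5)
Your proposal is correct, and on the core items it mirrors the paper: part~(1) is the same eigenvalue computation, and part~(2) is reduced to part~(1) together with multiplicativity of the determinant (Lemma~\ref{lem:det_product}), exactly as in the paper's proof. The differences are twofold. First, for parts~(3) and~(4) the paper simply cites Theorems~11.2 and~11.3 of Higham's book, whereas you prove them directly via simultaneous unitary diagonalisation of commuting Hermitian matrices and the observation that $A$ and $A^{-1}$ commute with $AA^{-1}=\I$; this makes the lemma self-contained at the cost of a little extra argument, and both routes are fine. Second, and more substantively, you explicitly address the point that $AB$ need not be Hermitian, so that ``$\trace\log(AB)=\log\det(AB)$'' is not literally an instance of part~(1); your similarity $A^{-1/2}(AB)A^{1/2}=A^{1/2}BA^{1/2}$ shows $AB$ is diagonalisable with positive real spectrum, which is exactly what is needed for $\log(AB)$ to be well defined and for its trace to equal $\sum_i\log\mu_i$. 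The paper glosses over this step (it applies statement~1, which is stated only for positive definite matrices, to the product $AB$), and the extra care matters because the lemma is later invoked in Lemma~\ref{lem:pot_drop_part1} with the non-commuting factors $(u\I-A)^{-1}$ and $\uhat\I-(A+vv^*)$, whose product is indeed not Hermitian. So your version is, if anything, slightly more rigorous than the paper's while following the same overall strategy.
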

\begin{proof}
Let $\lambda_1\ge \dots \ge \lambda_d>0$ be the eigenvalues of $A$. Then, we have that 
    \[
    \trace \log(A) = \sum_{i=1}^d\log(\lambda_i) = \log\left(\prod_{i=1}^d \lambda_i\right) = \log(\det(A)),
    \]
    which proves the first statement. 
    By the first statement  and Lemma~\ref{lem:det_product}, we have that 
    \[
    \trace \log(AB) = \log(\det(AB))= \log\det(A) + \log\det(B) = \trace \log(A)+\trace\log(B),
    \]
    hence the second statement holds. The proofs of the third and fourth statements can be found in Theorems 11.2 and 11.3 of Chapter 11 of \cite{HighamBook}.
\end{proof}

\begin{lemma}[\cite{condition}] \label{lem:condition}
Let $A\in\mathbb{C}^{d\times d}$ be a Hermitian  positive definite matrix. Then, the condition number $\kappa(A)$ of $A$ satisfies that 
\[
\kappa(A) \leq \frac{1+x}{1-x}, \qquad x \triangleq \sqrt{1 - \left( \frac{d}{\trace(A)} \right)^d\cdot\det(A) }. 
\]
\end{lemma}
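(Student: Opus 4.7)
The plan is to reduce the condition-number bound to a clean inequality on the eigenvalues and then finish with AM--GM and a one-variable calculus exercise.

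First, I would observe that the statement is scale invariant: under $A \mapsto cA$ with $c > 0$, the ratio $\kappa(A)$ is unchanged, while $\trace(A)^d$ and $\det(A)$ each pick up a factor of $c^d$, so the quantity $(d/\trace(A))^d\det(A)$ and hence $x$ is also unchanged. I may therefore normalise to $\trace(A) = d$. Writing $\mu_1 \geq \mu_2 \geq \dots \geq \mu_d > 0$ for the eigenvalues, I have $\sum_i \mu_i = d$ and $x = \sqrt{1 - \prod_i \mu_i}$; note $x \in [0,1)$ since $A$ is positive definite and AM--GM (Lemma~\ref{lem:det_vs_tr}) guarantees $\prod_i \mu_i \in (0,1]$.

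Second, I would rearrange the target algebraically. Setting $\kappa = \mu_1/\mu_d$, the bound $\kappa \leq (1+x)/(1-x)$ is equivalent to $x \geq (\mu_1 - \mu_d)/(\mu_1 + \mu_d)$; squaring this and using the identity $1 - \bigl((\mu_1-\mu_d)/(\mu_1+\mu_d)\bigr)^2 = 4\mu_1\mu_d/(\mu_1+\mu_d)^2$ reduces the claim to
\[
\prod_{i=1}^d \mu_i \;\leq\; \frac{4\,\mu_1\mu_d}{(\mu_1+\mu_d)^2},
\qquad\text{equivalently,}\qquad
\prod_{i=2}^{d-1} \mu_i \;\leq\; \frac{4}{(\mu_1+\mu_d)^2}.
\]
This is a much cleaner target; the condition number has essentially been absorbed into a statement about the middle eigenvalues.

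Third, I would finish via AM--GM and a calculus maximisation. Let $s := \mu_1 + \mu_d$, so that $\mu_2, \ldots, \mu_{d-1}$ sum to $d - s$. Applying Lemma~\ref{lem:det_vs_tr} to the diagonal matrix $\mathrm{diag}(\mu_2,\ldots,\mu_{d-1})$ yields $\prod_{i=2}^{d-1}\mu_i \leq \bigl((d-s)/(d-2)\bigr)^{d-2}$, so it suffices to verify the one-variable inequality $s^2(d-s)^{d-2} \leq 4(d-2)^{d-2}$ for $s \in (0,d)$. A short computation shows that $g(s) := s^2(d-s)^{d-2}$ satisfies $g'(s) = d\,s\,(d-s)^{d-3}\,(2-s)$, so $g$ attains its unique interior maximum at $s = 2$, with $g(2) = 4(d-2)^{d-2}$, closing the argument. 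The case $d=2$ is immediate since the middle product is empty and the required inequality becomes $1 \leq 4/(\mu_1+\mu_2)^2 = 4/4 = 1$.

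I do not anticipate a real obstacle. The only step requiring a small observation is the algebraic reformulation into $\prod_{i=2}^{d-1}\mu_i \leq 4/(\mu_1+\mu_d)^2$; after that, everything follows from AM--GM on the middle eigenvalues and the elementary maximisation of $s^2(d-s)^{d-2}$.
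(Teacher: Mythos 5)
Your proof is correct, and it is worth noting that the paper itself does not prove this lemma at all: it is imported by citation from~\cite{condition}, so your argument supplies a self-contained elementary derivation of a black-box ingredient. I checked the steps: the scale invariance and the normalisation $\trace(A)=d$ are fine; the equivalence $\kappa(A)\le\frac{1+x}{1-x}\iff x\ge\frac{\mu_1-\mu_d}{\mu_1+\mu_d}$ uses only $0\le x<1$ and $\mu_d>0$, and squaring is legitimate since both sides are nonnegative; the identity $1-\bigl(\tfrac{\mu_1-\mu_d}{\mu_1+\mu_d}\bigr)^2=\tfrac{4\mu_1\mu_d}{(\mu_1+\mu_d)^2}$ correctly reduces the claim to $\prod_{i=2}^{d-1}\mu_i\le 4/(\mu_1+\mu_d)^2$; AM--GM (Lemma~\ref{lem:det_vs_tr} applied to the middle eigenvalues) and the computation $g'(s)=d\,s\,(d-s)^{d-3}(2-s)$ for $g(s)=s^2(d-s)^{d-2}$, with maximum $g(2)=4(d-2)^{d-2}$ on $(0,d)$, are all correct, and the degenerate cases $d=2$ (empty middle product, equality) and $d=1$ (trivial) are handled or trivially true. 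Compared with simply invoking~\cite{condition}, your route buys transparency at essentially no cost: it relies only on tools already present in the paper (AM--GM, eigenvalue decomposition) plus one single-variable maximisation, and it also exposes when the bound is tight, namely when the $d-2$ middle eigenvalues are equal and $\mu_1+\mu_d$ equals $2\trace(A)/d$, which is consistent with the cited result being a best-possible bound.
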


\section{The First Algorithm and Analysis\label{sec:first}}

This section 
presents and analyses our first algorithm, and is organised as follows. Section~\ref{sec:algo1} gives the formal description of our first algorithm, whose correctness is analysed in Section~\ref{sec:proof1}; Section~\ref{sec:keylemma} proves  a key lemma used in our analysis.

\subsection{Algorithm Description\label{sec:algo1}}

Our designed  algorithm proceeds in $m/2$ iterations, and picks one vector in each iteration. At the initialisation step, the algorithm sets
\[
A_0 = \mathbf{0}_{d\times d}, \quad \seta_0=\emptyset, \quad \setb_0=\mathcal{I},
\]
and  $u_0=1/2,\delta_u=\alpha/d$. Then, in iteration $0\leq j< m/2$,  the algorithm selects a vector $v\in \setb_j$  and performs the following operations:
\[
A_{j+1} = A_{j} + vv^*, \qquad \seta_{j+1} = \seta_j \cup \{v\} 
, \qquad \setb_{j+1} = \setb_j \setminus \{v\}.
\]
The vector $v$ is chosen such that the increase of the potential function from
$\Phi^{u_{j}}(A_j)$ to $\Phi^{u_{j}+\delta_u}(A_j+vv^*)$
is minimised.  That is, the algorithm picks $v$ that  minimises the function
\[
    \Phi^{u_j +\delta_u}(A_j + v v^*) = -\log\Big( \det((u_j + \delta_u)\cdot \I-A_j-vv^*)\Big).
\]
After this, the algorithm sets $u_{j+1} = u_j +\delta_u$, and   moves to the next iteration.
See  Algorithm~\ref{alg:alg1} for formal description.

\begin{algorithm}[th!]
\SetAlgoCaptionSeparator{}
\caption{\label{alg:alg1}}
\textbf{Input}: $\mathcal{I} = \{v_i\}_{i = 1}^m$, where $v_i \in \C^d$ and $\norm{v_i}^2 = \alpha$\\
$A_0 \gets \mathbf{0}_{d \times d}$, $\seta_0\gets \emptyset$\\
$\setb_0 \gets \mathcal{I}$\\

$u_0 \gets 1/2$ \\
$\delta_u \gets \alpha / d$ \\
$j \gets 0$  \\ 
\While{$j< m/2$}{
    $u_{j+1} \gets u_j + \delta_u$  \\
    $v_j \gets \argmax_{v \in \setb_j} \det(u_{j+1} \identity - A_j - v v^*)$ \label{alg:alg1_potential_minimisation} \\
    $A_{j+1} \gets A_j + v_j v_j^*$ \\
    $\seta_{j+1} \gets \seta_j \cup \{v_j\}$\\
    $\setb_{j+1} \gets \setb_j \setminus \{v_j\}$\\
    $j \gets j+1$
}
\Return $\seta_{m/2}$
\end{algorithm}

\subsection{Proof of Theorem~\ref{alg:alg1} \label{sec:proof1}}

We introduce $c_j\triangleq u_j-\lambda_{\max}(A_j)$, which is  the difference between $u_j$ and the maximum eigenvalue of $A_j$ constructed by the algorithm in  iteration $j$. Recall that Theorem~\ref{alg:alg1} assumes that $m\geq 221d^2$, which is equivalent to 
\[
\alpha\leq \frac{1}{221d};
\]
we assume that this holds throughout the rest of the section. We first claim that the constructed $A_j$ in every iteration satisfies the following:
\begin{enumerate}
    \item the maximum eigenvalue $\lambda_{\max}(A_j)$ is \emph{far away} from   $u_j$ in every iteration;
    \item the condition number $\kappa(u_j\I - A_j)$ is at most $3/2$ in every iteration.
\end{enumerate}
These facts are summarised in Lemma~\ref{lem:gap_bound}, which will be proven in the next subsection.

\begin{lemma} \label{lem:gap_bound}
    It holds for every
    $0 \leq j \leq m/2$
    that   $c_j\geq 1/3$
    and $\kappa(u_j\I -A_j) \le 3/2$.
\end{lemma}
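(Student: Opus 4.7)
The plan is to prove Lemma~\ref{lem:gap_bound} by induction on $j$, reducing both conclusions to a single bound on the potential $\Phi^{u_j}(A_j)$. First I would observe that the two parts of the lemma are not independent: by~\eqref{eq:invariant}, $\trace(u_j\I - A_j) = d/2$ for every $j$, and if $\kappa(u_j\I - A_j) \leq 3/2$ then every eigenvalue of $u_j\I - A_j$ is at most $(3/2)\,c_j$, so $d/2 = \trace(u_j\I - A_j) \leq (3/2)\,d\,c_j$ forces $c_j \geq 1/3$ automatically. Hence it suffices to prove $\kappa(u_j\I - A_j) \leq 3/2$ for every $0 \leq j \leq m/2$.

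Next, I would convert this to a statement about the potential. Applying Lemma~\ref{lem:condition} with $\trace(u_j\I - A_j) = d/2$ gives $\kappa(u_j\I - A_j) \leq (1+x)/(1-x)$ for $x = \sqrt{1 - 2^d\det(u_j\I - A_j)}$. A short calculation shows $(1+x)/(1-x) \leq 3/2$ iff $x \leq 1/5$, iff $\det(u_j\I - A_j) \geq (24/25)\,2^{-d}$, iff
\[
\Phi^{u_j}(A_j) \;\leq\; d\log 2 + \log(25/24).
\]
Since $\Phi^{u_0}(A_0) = -\log\det((1/2)\I) = d\log 2$, the base case $j = 0$ holds with slack exactly $\log(25/24)$, and the whole lemma reduces to showing that $\Phi^{u_j}(A_j)$ grows by at most $\log(25/24)$ over all $m/2$ iterations combined.

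For the inductive step I would exploit the algorithm's selection rule via an averaging argument. By Lemma~\ref{lem:MDL}, the per-step change decomposes as
\[
\Phi^{u_{j+1}}(A_{j+1}) - \Phi^{u_j}(A_j) \;=\; \bigl[\Phi^{u_{j+1}}(A_j) - \Phi^{u_j}(A_j)\bigr] \;-\; \log\bigl(1 - v_j^{*}(u_{j+1}\I - A_j)^{-1} v_j\bigr).
\]
Algorithm~\ref{alg:alg1} maximises $\det(u_{j+1}\I - A_j - vv^{*})$, equivalently minimises $v^{*}(u_{j+1}\I - A_j)^{-1} v$ over $v \in \setb_j$, so the chosen $v_j$ does at least as well as the uniform average over $\setb_j$. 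Using the identity $\sum_{v\in\setb_j} vv^{*} = \I - A_j$ (which follows from $\sum_{v\in\mathcal{I}} vv^{*} = \I$), that average equals $\trace((u_{j+1}\I - A_j)^{-1}(\I - A_j))/(m-j)$, while the ``barrier increase'' $\Phi^{u_{j+1}}(A_j) - \Phi^{u_j}(A_j)$ equals $-\trace\log(\I + (\alpha/d)(u_j\I - A_j)^{-1})$. The inductive bounds $c_j \geq 1/3$ and $\lambda_{\max}(u_j\I - A_j) \leq 1/2$ give sharp control on the spectra of both $u_j\I - A_j$ and $u_{j+1}\I - A_j$, and plugging them into these two trace expressions should yield a per-iteration bound whose sum over $j = 0, \ldots, m/2 - 1$, combined with $m\alpha = d$ and $\alpha \leq 1/(221\,d)$, is at most $\log(25/24)$.

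The main obstacle is making this per-iteration estimate tight enough for the constant $221$ actually to suffice. The two terms in the decomposition nearly cancel: a leading-order Taylor expansion in $\alpha/d$ and in $\bar{y}_j = v_j^{*}(u_{j+1}\I - A_j)^{-1} v_j$ (which is at most $\alpha/c_j \leq 3\alpha$ by the inductive spectral bounds) reveals that the first-order contributions combine into a negative multiple of $\trace((u_j\I - A_j)^{-1}) \geq 2d$, which absorbs the otherwise dominant $d/(m-j)$ term and leaves only small second-order residuals. Verifying that these residuals, summed over all $m/2$ iterations under $\alpha \leq 1/(221\,d)$, stay strictly below $\log(25/24)$ is the delicate part, and this is precisely where the specific value $221$ gets pinned down.
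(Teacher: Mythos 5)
Your proposal is correct, and while it shares the paper's overall architecture --- induction, the trace invariant $\trace(u_j\I - A_j) = d/2$, a telescoped lower bound on $\det(u_j\I - A_j)$ fed into Lemma~\ref{lem:condition}, and an averaging argument over the unpicked vectors via $\sum_{v\in\setb_j}vv^* = \I - A_j$ --- it handles the central per-iteration estimate by a genuinely different and simpler route. The paper lower-bounds the average potential drop over all of $\setb_j$, which requires the decomposition of Lemma~\ref{lem:pot_drop_part1}, the Maclaurin-series and trace-power-inequality argument of Lemma~\ref{lem:aks_bound}, and Chebyshev's sum inequality (Lemma~\ref{lem:i_minus_a_term}), arriving at the per-iteration bound $-2\alpha^2/c_j^2$. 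You instead push the averaging inside the logarithm: by the Matrix Determinant Lemma, maximising $\det(u_{j+1}\I - A_j - vv^*)$ is the same as minimising $v^*(u_{j+1}\I - A_j)^{-1}v$, whose average over $\setb_j$ is exactly $\frac{1}{m-j}\trace\bigl((u_{j+1}\I - A_j)^{-1}(\I - A_j)\bigr)$ --- the device the paper uses only for its second algorithm --- and the comparison with the barrier gain $\trace\log\bigl(\I + \delta_u(u_j\I - A_j)^{-1}\bigr)$ then rests on AM--HM ($\trace((u_j\I-A_j)^{-1})\ge 2d$) rather than Chebyshev. Your key cancellation claim is right: with $t_i = u_j - \lambda_i(A_j)$ and $\beta = 1/2 - j/m$, the linear contributions combine into $\frac{d}{m-j} - \frac{1}{2(m-j)}\sum_i t_i^{-1} \le 0$, and the residuals are $O(\alpha^2)$ per iteration (using $v_j^*(u_{j+1}\I-A_j)^{-1}v_j \le 3\alpha$ and $t_i\ge 1/3$), so the total potential increase is $O(d\alpha)$, comfortably below your budget $\log(25/24)\approx 0.041$ when $\alpha \le 1/(221d)$; the ``delicate part'' you flag does go through, in fact with more slack than the paper's own total of $9/221$, which is razor-thin against $\log(25/24)$. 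One small slip: the inductive hypothesis gives $\lambda_{\max}(u_j\I - A_j)\le (3/2)c_j\le 3/4$, not $\le 1/2$ (the trace invariant forces $\lambda_{\max}\ge 1/2$); this is harmless, since your estimates only need $t_i\ge 1/3$ and $\sum_i t_i = d/2$.
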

Lemma~\ref{lem:gap_bound} is sufficient to prove Theorem \ref{theorem:alg1}.
\begin{proof}[Proof of Theorem \ref{theorem:alg1}]
After $m/2$ iterations of Algorithm~\ref{alg:alg1} we have \[
u_{m/2} = u_0 + \frac{m}{2}\cdot \delta_u= \frac{1}{2} + 
\frac{m}{2}\cdot\frac{\alpha}{d} = 1,
\]
using the fact $d=m\alpha$~(Lemma~\ref{lem:amd}). By Lemma~\ref{lem:gap_bound}, we have that $\lambda_{\mathrm{max}}(A_{m/2}) \leq 2/3$ and $\kappa(\I - A_{m/2}) \leq 3/2$. This  implies that
    \[
        \lambda_{\mathrm{max}}(\I - A_{m/2}) \leq \frac{3}{2}\cdot  \lambda_{\mathrm{min}}(\I - A_{m/2}),
    \]
    which is equivalent to
    \[
        1 - \lambda_{\mathrm{min}}(A_{m/2}) \leq \frac{3}{2}\cdot  (1 - \lambda_{\mathrm{max}}(A_{m/2})).
    \]
    Combining this with the fact that
    \[
        \lambda_{\mathrm{max}}(A_{m/2}) \geq \frac{1}{d} \cdot \trace(A_{m/2})= \frac{1}{d}\cdot \frac{\alpha m}{2} = \frac{1}{2}
    \]
    gives us that 
    $
        \lambda_{\mathrm{min}}(A_{m/2}) \geq 1/4$.
    The algorithm performs $O(m)$ iterations, each of which consists of elementary matrix operations, and $O(m)$ determinant computations, each of which has running time $O(d^3)$. 
    Thus, the running time of the algorithm is $O(\mathrm{poly}(m, d))$.
\end{proof}

\subsection{Proof of Lemma~\ref{lem:gap_bound}\label{sec:keylemma}}

This subsection analyses the average change of the potential function $\Phi^{u_j}(A_j)$ in each iteration, and 
proves by induction that Lemma~\ref{lem:gap_bound} holds for every iteration. We first assume that Lemma~\ref{lem:gap_bound} holds for any iteration $j'\leq j$ for some fixed $j$, and only study the $j$th iteration of Algorithm~\ref{alg:alg1}. For simplicity, we write  \[
A=A_j, \quad B=B_j,\quad \seta= \seta_j, \quad \setb=\setb_j,
\]
as well as $u = u_j$ and $\uhat = u_{j+1} = u_j + \delta_u$.

\begin{lemma}\label{lem:pot_drop_part1}
   Assuming  Lemma~\ref{lem:gap_bound} holds up to iteration $j$,  we have  for the constructed  matrix $A=A_j$  that 
   \begin{align}
\lefteqn{\sum_{v\in\setb}  \Phi^u(A) - \Phi^{\uhat} (A+vv^{*})}\nonumber\\
     & =(m-j)\cdot \trace \left[\log\Big((u\I - A)^{-1}(\uhat\I-A)\Big)\right] + \sum_{v\in\setb} \log\Big( 1-v^{*}(\uhat\I-A)^{-1}v\Big).\label{eq:lowerbd1}
    \end{align}
\end{lemma}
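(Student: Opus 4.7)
The identity to be proven is a clean algebraic manipulation of the potential, and the core of the plan is to reduce each summand to a sum of two contributions via the Matrix Determinant Lemma, then collect one of them across the sum. The only delicate aspect is ensuring all logarithms and inverses in the statement are actually well-defined; this is where the inductive assumption is needed.

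First I would rewrite each summand in terms of determinants. By equation~\eqref{eq:potential_rewrite},
\[
\Phi^u(A) - \Phi^{\uhat}(A+vv^*) = -\log\det(u\I-A) + \log\det(\uhat\I - A - vv^*).
\]
Under the inductive hypothesis, $c_j\ge 1/3$, so $u\I-A \succ 0$ and hence $\uhat\I-A = (u\I-A) + \delta_u\I$ is invertible. Next I would apply the Matrix Determinant Lemma (Lemma~\ref{lem:MDL}) with the rank-one update $-vv^*$ to get
\[
\det(\uhat\I - A - vv^*) = \bigl(1 - v^*(\uhat\I-A)^{-1}v\bigr)\cdot \det(\uhat\I-A).
\]
(I would also note in passing that the factor in parentheses is positive, since $\lambda_{\min}(\uhat\I-A)\ge c_j + \delta_u \ge 1/3$ while $\|vv^*\|\le \alpha \le 1/(221 d)$, so $\uhat\I-A-vv^*\succ 0$ and the logarithms are well-defined.)

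Substituting back and using Lemma~\ref{lem:det_product} to combine the two $\det$ terms,
\[
\Phi^u(A) - \Phi^{\uhat}(A+vv^*) = \log\det\!\bigl((u\I-A)^{-1}(\uhat\I-A)\bigr) + \log\bigl(1 - v^*(\uhat\I-A)^{-1}v\bigr).
\]
The matrices $(u\I-A)^{-1}$ and $(\uhat\I-A)$ are both functions of $A$, hence commute, so their product is Hermitian positive definite (with eigenvalues $(\uhat-\lambda_i(A))/(u-\lambda_i(A))>0$), which lets me replace $\log\det$ by $\trace\log$ via Lemma~\ref{lem:matrixlogrules}(1). This gives the per-$v$ identity
\[
\Phi^u(A) - \Phi^{\uhat}(A+vv^*) = \trace\log\!\bigl((u\I-A)^{-1}(\uhat\I-A)\bigr) + \log\bigl(1 - v^*(\uhat\I-A)^{-1}v\bigr).
\]

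Finally, I would sum this identity over $v \in \setb$. The first term on the right is independent of $v$, so it contributes $|\setb| = m-j$ copies, while the second term remains inside the sum. This yields exactly~\eqref{eq:lowerbd1}. There is no real obstacle beyond bookkeeping; the only substantive point is verifying that the invertibility/positivity conditions for the Matrix Determinant Lemma and the trace-log identity hold, which is handled by the inductive hypothesis together with the bound $\alpha\le 1/(221d)$.
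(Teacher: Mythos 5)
Your proposal is correct and follows essentially the same route as the paper: rewrite the potential differences as determinants, apply the Matrix Determinant Lemma to the rank-one update, combine via the determinant product rule, convert back to $\trace\log$, and sum over $\setb$, with the inductive hypothesis supplying the needed positivity/invertibility. The only difference is cosmetic — you apply the Matrix Determinant Lemma per summand before summing, whereas the paper first sums the trace-log expressions and then factors the product of determinants — and both yield the identity \eqref{eq:lowerbd1}.
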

\begin{proof}
By Lemma~\ref{lem:gap_bound} we have that $\lambda_{\min}(u\I-A)=u-\lambda_{\max}(A)\ge 1/3>0$ and \[
\lambda_{\min}(u\I - A-vv^*) \ge \lambda_{\min}(u\I - A-\alpha \I)\ge \frac{1}{3}  - \alpha \ge \frac{1}{3} - \frac{1}{221d}>0;
\]
as such  both of  $(u\I-A)$ and $(\uhat \I - A - vv^*)$ are invertible.
Hence, it holds for any matrix $A$ and vector $v\in\mathbb{R}^d$ that
\begin{align*}
    \Phi^u(A) - \Phi^{\uhat}(A+vv^{*}) & = \trace \Big[\log\Big((u\I - A)^{-1}\Big)- \log\Big((\uhat \I - (A+vv^{*}))^{-1}\Big)\Big]\\
    &= \trace \Big[\log\Big((u\I - A)^{-1}\Big) + \log\Big(\uhat \I - (A+vv^{*})\Big)\Big]\\
    &= \trace \Big[\log\Big((u\I - A)^{-1}(\uhat \I - (A+vv^{*}))\Big) \Big],
\end{align*}
where we apply the definition of $\Phi^u(A)$ and  Lemma~\ref{lem:matrixlogrules}.
Thus,
\begin{align}
    \sum_{v\in\setb}  \Phi^u(A) - \Phi^{\uhat} (A+vv^{*}) 
    &= \sum_{v\in\setb} \trace \Big[\log\Big((u\I - A)^{-1}(\uhat \I - (A+vv^{*}))\Big) \Big] \nonumber\\
    &= \sum_{v\in\setb} \log \Big[ \det \Big((u\I - A)^{-1}(\uhat \I - (A+vv^{*}))\Big) \Big]\nonumber\\
      &=\log \left[ \prod_{v\in\setb} \det\Big((u\I - A)^{-1}(\uhat \I - (A+vv^{*}))\Big) \right] \nonumber\\
    &=\log \left[ \prod_{v\in\setb} \det\Big((u\I - A)^{-1}\Big)\det\Big(\uhat \I - (A+vv^{*})\Big) \right] \label{eq:l1}\\
    &=\log \left[ \det\Big((u\I - A)^{-1}\Big)^{m-j}\prod_{v\in\setb} \det\Big(\uhat \I - (A+vv^{*}) \Big) \right] \nonumber\\
    &=\log \left[ \det\Big((u\I - A)^{-1}\Big)^{m-j} \prod_{v\in\setb} \det(\uhat\I - A)(1-v^{*}(\uhat\I-A)^{-1}v) \right] \label{eq:l2}\\
    &=\log \left[ \det\Big((u\I - A)^{-1}\Big)^{m-j}\det(\uhat\I-A)^{m-j}\prod_{v\in\setb} (1-v^{*}(\uhat\I-A)^{-1}v) \right] \nonumber\\
    &=\log \left[ \det\Big(\Big((u\I - A)^{-1}(\uhat\I-A)\Big)^{m-j}\Big)\prod_{v\in\setb} (1-v^{*}(\uhat\I-A)^{-1}v) \right] \label{eq:l3}\\
    &=\log \left[\det\Big(\Big(u\I - A)^{-1}(\uhat\I-A)\Big)^{m-j}\Big)\right] + \log\left[\prod_{v\in \setb} (1-v^{*}(\uhat\I-A)^{-1}v)\right]\nonumber\\
    &=\trace\left[ \log\Big((u\I - A)^{-1}(\uhat\I-A)\Big)^{m-j}\right] + \sum_{v \in \setb} \log\Big( 1-v^{*}(\uhat\I-A)^{-1}v\Big)\nonumber\\
    &=(m-j)\cdot \trace \left[\log\Big((u\I - A)^{-1}(\uhat\I-A)\Big)\right] + \sum_{v\in \setb} \log\Big( 1-v^{*}(\uhat\I-A)^{-1}v\Big),\nonumber
\end{align}
where \eqref{eq:l1} and \eqref{eq:l3} follow by Lemma~\ref{lem:det_product}, and \eqref{eq:l2} follows by Matrix Determinant Lemma~(Lemma~\ref{lem:MDL}). This 
 completes the proof.
\end{proof}

We examine the second term in \eqref{eq:lowerbd1}, which involves the set of all the unpicked vectors $v\in\setb$. The following lemma derives a ``universal'' lower bound of this term, in the sense that this bound is only a function of $A$ instead of individual vectors in $\setb$.

\begin{lemma}
Assuming Lemma~\ref{lem:gap_bound} holds up to iteration $j$, we have  for the constructed  $A=A_j$  that
 \[
      \sum_{v\in\setb} \log\Big(1-v^{*}(\uhat\I-A)^{-1}v\Big) \geq \frac 1 {\alpha}\cdot  \trace\left[(\I - A) \cdot \log\Big(\I-\alpha(\uhat\I-A)^{-1}\Big)\right].
 \]
\label{lem:aks_bound}
\end{lemma}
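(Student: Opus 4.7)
The plan is to reduce the lemma to a scalar Jensen inequality applied vector-by-vector. Setting $M \triangleq (\uhat\I - A)^{-1}$, the key observation is that the scalar function $f(t) \triangleq \log(1 - \alpha t)$ is concave on the relevant range of eigenvalues of $M$, and that the identity $\sum_{v \in \setb} vv^{*} = \I - A$ holds. The latter follows from the $\mathsf{KS}_2$ hypothesis $\sum_{i=1}^m v_i v_i^{*} = \I$ combined with $A = \sum_{v \in \seta} vv^{*}$. These two ingredients collapse the left-hand side into a trace expression involving $\I - A$.

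First I would verify that $f$ is concave and finite on the spectrum of $M$. By the inductive hypothesis (Lemma~\ref{lem:gap_bound} up to iteration $j$), $\lambda_{\min}(u\I - A) = c_j \geq 1/3$, hence $\lambda_{\min}(\uhat\I - A) \geq 1/3$ and $\lambda_{\max}(M) \leq 3$. Together with $\alpha \leq 1/(221d)$, this yields $\alpha \lambda_{\max}(M) < 1$, so $\I - \alpha M$ is positive definite and $f$ is strictly concave on $[\lambda_{\min}(M),\lambda_{\max}(M)]$.

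Next, for each $v \in \setb$, write $v = \sqrt{\alpha}\,\hat v$ with $\|\hat v\|=1$, diagonalise $M = \sum_i \mu_i e_i e_i^{*}$, and expand $\hat v = \sum_i c_i e_i$ with $\sum_i |c_i|^2 = 1$. Then $v^{*}Mv = \alpha \sum_i |c_i|^2 \mu_i$, and Jensen's inequality applied to the concave $f$ with the discrete distribution $\{|c_i|^2\}$ gives
\[
\log(1 - v^{*}Mv) \;\geq\; \sum_{i=1}^d |c_i|^2 \log(1 - \alpha \mu_i) \;=\; \hat v^{*}\log(\I - \alpha M)\,\hat v \;=\; \frac{1}{\alpha}\, v^{*}\log(\I - \alpha M)\, v.
\]
Summing over $v \in \setb$ and substituting $\sum_{v\in\setb} vv^{*} = \I - A$ yields
\[
\sum_{v\in\setb}\log(1 - v^{*}Mv) \;\geq\; \frac{1}{\alpha}\,\trace\!\left[\log(\I - \alpha M)\sum_{v\in\setb} vv^{*}\right] \;=\; \frac{1}{\alpha}\,\trace\!\left[(\I - A)\log(\I - \alpha M)\right],
\]
where the last equality uses that $\I - A$ and $\log(\I - \alpha M)$ commute (both are functions of $A$), so reordering the factors inside the trace is harmless.

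I do not anticipate a substantive obstacle: the argument is essentially one application of scalar Jensen followed by the rank-one resolution of $\I - A$. The only care required is to remain in the concave regime $\alpha t < 1$ throughout the spectrum of $M$, which is exactly where the inductive hypothesis $c_j \geq 1/3$ and the density assumption $m \geq 221 d^2$ enter.
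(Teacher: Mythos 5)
Your proof is correct, and it takes a genuinely different route from the paper. The paper proves this lemma by expanding each term $\log\bigl(1-v^{*}(\uhat\I-A)^{-1}v\bigr)$ as a Maclaurin series, bounding each power term via the trace inequality $\trace\bigl((XY)^k\bigr)\le\trace\bigl(X^kY^k\bigr)$ for PSD matrices together with $(vv^{*})^k=\alpha^{k-1}vv^{*}$, exchanging the sums over $k$ and $v\in\setb$, and re-summing the series into the matrix logarithm. You instead prove a per-vector inequality in one shot: writing $v=\sqrt{\alpha}\,\hat v$ and applying scalar Jensen to the concave map $t\mapsto\log(1-\alpha t)$ against the spectral weights $|c_i|^2$ of $\hat v$ in the eigenbasis of $M=(\uhat\I-A)^{-1}$, which is exactly the quadratic-form Jensen bound $\hat v^{*}f(M)\hat v\le f(\hat v^{*}M\hat v)$; summing over $v\in\setb$ and using $\sum_{v\in\setb}vv^{*}=\I-A$ then gives the claim (cyclicity of the trace alone justifies the final reordering, so the commutation remark is not even needed). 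Both arguments require the same preliminary control, namely $\alpha\,\lambda_{\max}(M)<1$ from the inductive hypothesis $c_j\ge 1/3$ and $\alpha\le 1/(221d)$, so that all logarithms are defined. What your route buys is economy: no series manipulation, no convergence bookkeeping, and no appeal to a general trace-power inequality (which, for the rank-one $Y=vv^{*}$ used in the paper, reduces term-by-term to the same Jensen fact you invoke once); the paper's series version, on the other hand, is the form that generalises directly when one wants to track the expansion term by term. Your argument would be a valid, and arguably cleaner, substitute for the paper's proof.
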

\begin{proof}
    By Lemma~\ref{lem:gap_bound} we have that
    \[
    v^*(\uhat \I-A)^{-1}v \le \alpha \lambda_{\max}((\uhat \I-A)^{-1})=\frac{\alpha}{\lambda_{\min}(\uhat \I - A)} = \frac {\alpha}{\delta_u + (u-\lambda_{\max}(A))}\le \frac {\alpha}{\delta_u+1/3} \le 3\alpha<1.
    \]
    Hence, we can  apply the
    scalar and 
    matrix Maclaurin series of $\log\left(1-x\right)$  and have that
\begin{align}
    \sum_{v\in\setb} \log\Big(1-v^{*}(\uhat\I-A)^{-1}v\Big) &= -\sum_{v\in \setb}\sum_{k=1}^\infty \frac{\Big(v^{*}(\uhat\I-A)^{-1}v\Big)^k}{k} \nonumber\\
    &= - \sum_{v\in \setb}\sum_{k=1}^\infty \frac{\trace\Big[\Big((\uhat\I-A)^{-1}vv^{*}\Big)^k\Big]}{k} \nonumber \\
    &\ge- \sum_{v\in \setb}\sum_{k=1}^\infty \frac{\trace\Big[(\uhat\I-A)^{-k}(vv^{*})^k\Big]}{k} \label{eq:trace_ineq} \\
    &= -\sum_{k=1}^\infty  \sum_{v\in \setb}\frac{\trace\Big[(\uhat\I-A)^{-k}(vv^{*})^k\Big]}{k} \nonumber \\
    &= -\sum_{k=1}^\infty  \sum_{v\in \setb}\frac{\trace\Big[\alpha^{k-1}(\uhat\I-A)^{-k}(vv^*)\Big]}{k} \label{eq:norm}\\
    &= -\sum_{k=1}^\infty  \frac{\trace\Big[\alpha^{k-1}(\uhat\I-A)^{-k}\sum_{v\in \setb}(vv^*)\Big]}{k} \nonumber \\
    &= -\sum_{k=1}^\infty  \frac{\trace\Big[\alpha^{k-1}(\uhat\I-A)^{-k} (\I-A)\Big]}{k} \nonumber \\
    &= -\sum_{k=1}^\infty  \frac{\trace\Big[\frac 1 {\alpha}\Big(\frac {1}{\alpha}(\uhat\I-A)\Big)^{-k} (\I-A)\Big]}{k} \nonumber\\
    &= \frac 1 {\alpha}\cdot \trace\left[-\sum_{k=1}^\infty  \frac{\Big(\alpha(\uhat\I-A)^{-1}\Big)^{k}}{k}\cdot (\I-A)\right] \nonumber \\
    &= \frac 1 {\alpha} \trace \left[(\I-A)\cdot \log\Big(\I-\alpha(\uhat\I-A)^{-1}\Big)\right]\nonumber,
\end{align}
where \eqref{eq:trace_ineq} holds by the fact that $\trace((XY)^k)\leq \trace(X^k Y^k)$ holds for PSD matrices $X,Y$ and $k\in\mathbb{Z}^+$, and 
\eqref{eq:norm} holds by the fact that $\|v\|^2= \alpha$ for every $v\in B$. 
\end{proof}

Next we  apply  Chebyshev's sum inequality to replace the factor of $(\I-A)$ in the lower bound given in Lemma~\ref{lem:aks_bound} with the scalar $(m-j)/m$.

\begin{lemma}[Chebyshev's Sum Inequality] \label{lem:sum_prod_decreasing_funcs}
    Let $f(x): \R \xrightarrow{} \R$ and $g(x): \R \xrightarrow[]{} \R$ be both monotone increasing~(or both monotone decreasing) functions on $x$. Then, it holds for  $S \subset \R$ that 
    \[
        \sum_{x \in S} g(x) f(x) \geq \frac{1}{\cardinality{S}}\left( \sum_{x \in S} g(x)\right) \left(\sum_{x \in S} f(x)\right).
    \]
\end{lemma}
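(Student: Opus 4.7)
The plan is to reduce the statement to the classical pairwise monotonicity identity: if $f$ and $g$ vary in the same direction, then for any two points $x, y$ the differences $f(x) - f(y)$ and $g(x) - g(y)$ carry the same sign, so their product is nonnegative. Summing this elementary non-negative quantity over all ordered pairs in $S \times S$ and expanding collects exactly the two desired sums plus a cross term that we want to bound.

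Concretely, first I would observe that since $f$ and $g$ are both monotone increasing (or both monotone decreasing) on $\R$, for every $x, y \in S$ we have
\[
(f(x) - f(y))(g(x) - g(y)) \geq 0.
\]
Next I would sum this inequality over all $(x, y) \in S \times S$ and expand the product:
\[
0 \leq \sum_{x,y \in S} (f(x) - f(y))(g(x) - g(y)) = 2|S| \sum_{x \in S} f(x) g(x) - 2\left(\sum_{x \in S} f(x)\right)\left(\sum_{x \in S} g(x)\right),
\]
where the two ``diagonal'' terms each contribute $|S|\sum_{x} f(x)g(x)$ and the two ``cross'' terms each contribute $-\left(\sum_x f(x)\right)\left(\sum_x g(x)\right)$. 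Rearranging and dividing by $2|S|$ yields the claim.

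Since the statement is essentially a textbook identity, there is no real obstacle in the proof. The only minor point to mention is that the inequality is meaningful only when $S$ is finite (so the sums are well-defined), which is the only case needed in the subsequent application, where $S$ will be a finite subset corresponding to the unpicked vectors in $\setb$.
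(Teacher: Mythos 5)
Your proof is correct: the pairwise inequality $(f(x)-f(y))(g(x)-g(y))\geq 0$, summed over $S\times S$ and expanded, gives exactly the claimed bound after dividing by $2\cardinality{S}$. The paper states this lemma without proof, citing it as the classical Chebyshev sum inequality, so your argument is precisely the standard justification it implicitly relies on; your remark that $S$ must be finite is apt (in the application $S$ consists of the $d$ eigenvalues of $A_j$, and since eigenvalues may repeat one should really read the statement over a finite multiset or indexed sequence, for which your argument goes through verbatim).
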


\begin{lemma} \label{lem:i_minus_a_term}
Assuming Lemma~\ref{lem:gap_bound} holds up to iteration $j$, we have  for the constructed $A=A_j$ that
    \begin{align*}
\lefteqn{\sum_{v\in \setb}  \Phi^u(A) - \Phi^{\uhat} (A+vv^{*})}\\
     & \geq (m-j)\cdot \trace\left[ \log\Big((u\I - A)^{-1}(\uhat\I-A)\Big)\right] + \frac 1 {\alpha}\frac{m - j}{m} \cdot\trace \left[ \log \left(\I - \alpha(\uhat\I - A)^{-1}\right)\right].
    \end{align*}
\end{lemma}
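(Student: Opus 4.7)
The plan is to substitute the bound of Lemma~\ref{lem:aks_bound} into the equality of Lemma~\ref{lem:pot_drop_part1}, which already yields
\[
\sum_{v\in\setb} \Phi^u(A) - \Phi^{\uhat}(A+vv^{*}) \geq (m-j)\cdot\trace\left[\log\Big((u\I-A)^{-1}(\uhat\I-A)\Big)\right] + \frac{1}{\alpha}\trace\left[(\I-A)\log\Big(\I-\alpha(\uhat\I-A)^{-1}\Big)\right].
\]
All that remains is to show that the factor of $(\I - A)$ inside the second trace can be replaced by the scalar $(m-j)/m$, i.e.
\[
\trace\left[(\I-A)\log\Big(\I-\alpha(\uhat\I-A)^{-1}\Big)\right] \geq \frac{m-j}{m}\cdot\trace\left[\log\Big(\I-\alpha(\uhat\I-A)^{-1}\Big)\right].
\]

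To prove this, I would pass to the eigenbasis of $A$: both $(\I-A)$ and $\log(\I-\alpha(\uhat\I-A)^{-1})$ are spectral functions of $A$, so they commute and are simultaneously diagonalisable. Writing the eigenvalues of $A$ as $\lambda_1,\ldots,\lambda_d$, the inequality above becomes
\[
\sum_{i=1}^d (1-\lambda_i)\cdot\log\!\left(1-\frac{\alpha}{\uhat-\lambda_i}\right) \geq \frac{m-j}{m}\cdot\sum_{i=1}^d \log\!\left(1-\frac{\alpha}{\uhat-\lambda_i}\right).
\]
Under the inductive hypothesis of Lemma~\ref{lem:gap_bound}, each $\lambda_i$ lies in $[0, u-1/3]$, so both $\lambda\mapsto (1-\lambda)$ and $\lambda\mapsto \log(1-\alpha/(\uhat-\lambda))$ are monotone decreasing in $\lambda$ on this interval. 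Chebyshev's sum inequality (Lemma~\ref{lem:sum_prod_decreasing_funcs}), applied with $S=\{\lambda_1,\ldots,\lambda_d\}$, then gives
\[
\sum_{i=1}^d (1-\lambda_i)\log\!\left(1-\frac{\alpha}{\uhat-\lambda_i}\right) \geq \frac{1}{d}\left(\sum_{i=1}^d (1-\lambda_i)\right)\left(\sum_{i=1}^d \log\!\left(1-\frac{\alpha}{\uhat-\lambda_i}\right)\right).
\]

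To finish, I would evaluate the first factor on the right using $\trace(A_j)=j\alpha$ and Lemma~\ref{lem:amd}: $\sum_{i=1}^d (1-\lambda_i) = d - j\alpha = (m-j)\alpha$, and since $d = m\alpha$, the prefactor $\frac{1}{d}\sum_i(1-\lambda_i)$ collapses exactly to $(m-j)/m$. The second factor is non-positive because each logarithm's argument lies in $(0,1)$, so multiplying by the non-negative scalar $(m-j)/m$ preserves the direction of the inequality, producing exactly the desired lower bound.

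The only step that requires care is checking that Chebyshev is applied with the correct monotonicity: both factors must be monotone in the \emph{same} direction, and the non-positivity of the log factor must be tracked so that no spurious sign flip is introduced. This is the main (and only real) obstacle, and it is resolved cleanly by the eigenvalue bound from the inductive hypothesis together with the elementary calculus of $\lambda\mapsto\log(1-\alpha/(\uhat-\lambda))$. The algebraic identities $\sum_i(1-\lambda_i)=(m-j)\alpha$ and $d=m\alpha$ then deliver the exact scalar $(m-j)/m$ without slack.
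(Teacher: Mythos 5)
Your proposal is correct and follows essentially the same route as the paper: combine Lemma~\ref{lem:pot_drop_part1} with Lemma~\ref{lem:aks_bound}, diagonalise, apply Chebyshev's sum inequality (Lemma~\ref{lem:sum_prod_decreasing_funcs}) with $g(x)=1-x$ and $f(x)=\log\left(1-\alpha(\uhat-x)^{-1}\right)$, and evaluate $\frac{1}{d}\trace(\I-A)=(m-j)/m$ via $\trace(A_j)=j\alpha$ and $m\alpha=d$. Your extra sign-tracking is harmless but unnecessary, since the prefactor equals $(m-j)/m$ exactly rather than via a further inequality.
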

\begin{proof}
By Lemma~\ref{lem:pot_drop_part1} and Lemma~\ref{lem:aks_bound}, it is sufficient to show that
        \[
     \trace \left[(\I-A)\cdot \log\Big(\I-\alpha(\uhat\I-A)^{-1}\Big)\right] \geq \frac{m - j}{m} \cdot\trace \left[ \log \left(\I - \alpha(\uhat\I - A)^{-1}\right)\right].
    \]
    Similar to the proof of Lemma~\ref{lem:aks_bound}, we have $\norm{\alpha(\uhat \I-A)^{-1}}<1$ and
    \begin{align*}
     \trace \left[(\I-A)\cdot \log\Big(\I-\alpha(\uhat\I-A)^{-1}\Big)\right] = \sum_{i = 1}^d \left(1 - \lambda_i(A)\right)\cdot \log\left(1 - \frac{\alpha}{\uhat - \lambda_i(A)}\right).
    \end{align*}
    Applying Lemma~\ref{lem:sum_prod_decreasing_funcs} with $g(x) = (1 - x)$ and $f(x) = \log\left(1 - \alpha (\uhat - x)^{-1}\right)$ gives
    \begin{align*}
     \trace \left[(\I-A)\cdot \log\Big(\I-\alpha(\uhat\I-A)^{-1}\Big)\right] & \geq \frac{1}{d} \left( \sum_{i = 1}^d (1 - \lambda_i(A) ) \right) \left(\sum_{i = 1}^d \log\left(1 - \frac{\alpha}{\uhat - \lambda_i(A) }\right)\right) \\
     & = \frac{1}{d} \trace \left(\I - A\right) \trace\left[ \log \left(\I - \alpha(\uhat \I - A)^{-1}\right)\right].
    \end{align*}
    In iteration $j$, we have
    \begin{align*}
        \frac{1}{d} \cdot \trace \left(\I - A\right) & = \frac{1}{d} \cdot\left(d - j \cdot \alpha\right) \\
        & = \frac{1}{d}\cdot \left((m - j)\cdot  \alpha\right) \\
        & = \frac{m-j}{m},
    \end{align*}
    where we use the fact that $\trace(A_j) = j \alpha$ and $m\alpha =d$. Combining the two inequalities above proves the statement.
\end{proof}

Now we are ready to apply 
Lemma~\ref{lem:condition} and 
Lemma~\ref{lem:i_minus_a_term} to prove Lemma~\ref{lem:gap_bound}. 
For this proof, we explicitly use index $j$ for all the matrices, sets, and variables used in iteration $0
\leq j<m/2$.

\begin{proof}[Proof of Lemma~\ref{lem:gap_bound}]
We proceed by induction.  
For the base case $j=0$, we have $u_0=1/2$ and $\lambda_i(A_0)=0$ for every $1\leq i\leq d$ and therefore the the statement holds.

For the inductive step, we  assume that the statement holds for all $j'\le j$ and show that $c_{j+1}\ge 1/3$ and $\kappa(\uhat_{j+1}-A_{j+1})\le 2/3$ as long as $j< m/2$.
We achieve this using Lemma~\ref{lem:condition}, which will first require us to lower bound $\det(u_{j+1}\I-A_{j+1})$. We show that we can arrive at a lower bound for $\det(u_{j+1}\I-A_{j+1})$ as a consequence of upper bounding the average potential increase.
We first upper bound the average potential increase in terms of $\alpha$ and $c_j$. From Lemma~\ref{lem:i_minus_a_term}, we have that
\begin{align*}
    \lefteqn{\frac{1}{\cardinality{\setb_j}} \sum_{v \in \setb_j} \Phi^{u_j}(A_j) - \Phi^{u_{j+1}}(A_j + v v^*)} \\
    & \geq \trace \left[\log\left((u_j\I - A_j)^{-1}(u_{j+1}\I-A_j)\right) \right]+ \frac{1}{\alpha m}\cdot  \trace \left[\log\left(\I-\alpha(u_{j+1}\I-A_j)^{-1}\right)\right] \\
    & = \trace\left[ \log\left((u_j\I - A_j)^{-1}(u_{j+1}\I-A_j)\right)\right] + \frac{1}{d}\cdot \trace\left[ \log\left(\I-\alpha(u_{j+1}\I-A_j)^{-1}\right)\right] \\
    & = \frac{1}{d} \left[d \cdot \trace\left[ \log\left((u_j\I - A_j)^{-1}(u_{j+1}\I-A_j)\right)\right] + \trace\left[ \log\left(\I-\alpha(u_{j+1}\I-A_j)^{-1}\right) \right]\right].
\end{align*}
By writing $(u_j - x)^{-1} (u_j + \delta_u - x) = 1 + \delta_u (u_j - x)^{-1}$ we have that
\[
    \trace\left[\log\left((u_j\I - A_j)^{-1}(u_{j+1}\I - A_j)\right)\right] = \trace\left[\log\left(\I + \delta_u(u_j\I - A_j)^{-1}\right)\right],
\]
and therefore 
\begin{align*}
    \lefteqn{\frac{1}{\cardinality{\setb_j}} \sum_{v \in \setb_j} \Phi^{u_j}(A_j) - \Phi^{u_{j+1}}(A_j + v v^*)} \\
    & \geq \frac{1}{d} \left[d \cdot \trace\left[ \log\left(\I + \delta_u (u_j\I - A_j)^{-1}\right)\right] + \trace \left[\log\left(\I-\alpha(u_{j+1}\I-A_j)^{-1}\right)\right] \right] \\
    & = \frac{1}{d} \left[\trace\left[ \log\left(\I + \delta_u (u_j\I - A_j)^{-1}\right)^d\right] + \trace \left[\log\left(\I-\alpha(u_{j+1}\I-A_j)^{-1}\right) \right]\right] \\
    & \geq \frac{1}{d} \left[\trace\left[ \log\left(\I + \alpha(u_j\I - A_j)^{-1}\right)\right] + \trace\left[ \log\left(\I-\alpha(u_{j+1}\I-A_j)^{-1}\right) \right]\right] \\
    & = \frac{1}{d} \sum_{i = 1}^d \left[ \log\left(1 + \frac{\alpha}{u_j - \lambda_i(A_j)}\right) + \log\left(1 - \frac{\alpha}{u_j + \delta_u - \lambda_i(A_j)}\right)\right] \\
    & \geq \frac{1}{d} \sum_{i = 1}^d \left[ \log\left(1 + \frac{\alpha}{u_j - \lambda_i(A_j)}\right) + \log\left(1 - \frac{\alpha}{u_j - \lambda_i(A_j)}\right)\right] \\
    & = \frac{1}{d} \sum_{i = 1}^d \log\left(1 - \frac{\alpha^2}{(u_j - \lambda_i(A_j) )^2} \right).
\end{align*}
Since $\log(1 - \alpha^2 x^{-2})$ is increasing on $x$, we have that 
\begin{align*}
\frac{1}{\cardinality{\setb_j}} \sum_{v \in \setb_j} \Phi^{u_j}(A_j) - \Phi^{u_{j+1}}(A_j + v v^*)
    & \geq \frac{1}{d} \sum_{i = 1}^d \log\left(1 - \frac{\alpha^2}{(u_j - \lambda_i(A_j))^2} \right) \\
    & \geq \frac{1}{d} \sum_{i = 1}^d \log\left(1 - \frac{\alpha^2}{c_j^2} \right) \\
    & = \log \left(1 - \alpha^2 c_j^{-2}\right) \\
    & \geq - \frac{\alpha^2 c_j^{-2}}{1 - \alpha^2 c_j^{-2}} \\
    & \geq - \frac{2 \alpha^2}{c_j^2},
\end{align*}
where we use the assumption that $c_j \geq 1/3>\sqrt{2} \alpha$.  Since the vector  picked by Algorithm~\ref{alg:alg1}   increases the potential value no more than the average, we have that
\begin{align*}
    - \log \det (u_j \I - A_j) + \log \det (u_{j+1} \I - A_{j+1}) \geq - \frac{2 \alpha^2}{c_j^2},
\end{align*}
which is equivalent to
\[
    \det(u_{j+1}\I - A_{j+1}) \geq \exp\left(- \frac{2 \alpha^2}{c_j^2}\right) \cdot \det(u_j \I - A_j).
\]
By combining these inequalities for every iteration up to $j$, we have
\begin{equation} \label{eq:det}
    \det(u_{j+1}\I - A_{j+1}) \geq \exp\left(- \sum_{j' = 0}^{j} \frac{2 \alpha^2}{c_{j'}^2}\right) \cdot \det(u_0 \I).
\end{equation}
Now we apply the condition number inequality in Lemma~\ref{lem:condition}. Let
    \[
        x_{j+1} = \sqrt{1 - \left(\frac{d}{\trace(u_{j+1} \I - A_{j+1})}\right)^d \cdot \det(u_{j+1} \I - A_{j+1})}.
    \]
    From \eqref{eq:det}, we have
    \[
        x_{j+1} \leq \sqrt{1 - \exp\left(- \sum_{j' = 0}^j \frac{2 \alpha^2}{c_{j'}^2} \right) \left(\frac{d}{\trace(u_{j+1} \I - A_{j+1})}\right)^d \cdot \det(u_0 \I)}.
    \]
    Furthermore, notice that
    \begin{align*}
        \trace(u_{j+1} \I - A_{j+1}) & = d\cdot  (u_0 + (j+1) \cdot\delta_u) - (j+1) \cdot\alpha \\
        & = \frac{d}{2} + (j+1)\cdot \left(d \delta_u - \alpha\right) \\
        & = \frac{d}{2},
    \end{align*}
    where the final equality follows by the definition of $\delta_u = \alpha / d$.
    We also have that
    \[
        \det(u_0 \I) = \frac{1}{2^d}.
    \]
    Combined with the inductive hypothesis, we have
    \begin{align*}
        x_{j+1} & \leq \sqrt{1 - \exp\left(- \sum_{j' = 0}^j \frac{2 \alpha^2}{c_{j'}^2} \right) } \\
        & \leq \sqrt{1 - \exp\left(- 18 \sum_{j' = 0}^j \alpha^2\right)} \\
        & \leq \sqrt{1 - \exp\left(- 9 m \alpha^2\right)} \\
        & = \sqrt{1 - \exp\left(- 9 d \alpha \right)} \\
        & \leq \sqrt{1 - \exp\left(- \frac{9}{221} \right)}\\
        & \leq \frac{1}{5}.
    \end{align*}
    Then, it holds by Lemma~\ref{lem:condition} that 
    \[
        \kappa(u_{j+1} \I - A_{j+1}) \leq \frac{1 + x_{j+1}}{1 - x_{j+1}} \leq \frac{3}{2}.
    \]
    We also have that
    \[
        \lambda_{\mathrm{max}}(u_{j + 1}\I - A_{j+1}) \geq \frac{1}{d} \cdot \trace(u_{j+1}\I - A_{j+1}) = \frac{1}{2}.
    \]
    Then,
    \[
        c_{j+1} = \lambda_{\mathrm{min}}(u_{j+1}\I - A_{j+1}) \geq \frac{\lambda_{\mathrm{max}}(u_{j+1} \I - A_{j+1})}{\kappa(u_{j+1}\I - A_{j+1})} \geq \frac{1}{3},
    \]
    which completes the inductive step.
\end{proof}

We believe the tools used to prove Theorem~\ref{theorem:alg1} are interesting in their own right. It is known that the only non-trivial matrix function with a simple formula for rank-1 updates is $f(x)=x^{-1}$ \cite{Beckermann2017LowrankUO}. Previous potential function-based sparsification algorithms exploit this formula and the potential function $\Psi^u(A)=\trace (u\I-A)^{-1}$ in their analysis~\cite{BSS}. Sparsification can be seen as a relaxation of the $\mathsf{KS}_2$ problem in the sense that sparsification algorithms are free to add different scalar multiples of some $vv^*$ in each iteration and use the same vector multiple times. The  analyses of these algorithms appear to become intractable when these relaxations are removed, partly due to the use of the Sherman-Morrison formula for the inverse of rank-1 updates. We overcome this apparent intractability using the logarithmic potential function $\Phi^u(A)=\trace \log(u\I-A)^{-1}$, the Matrix Determinant Lemma and Lemma~\ref{lem:aks_bound}.
In addition, instead of examining all the vectors of $\mathcal{I}$ in each iteration, we analyse the change of potential functions over all the unpicked vectors in each iteration, ensuring that the set of picked vectors and the set of unpicked ones always forms a partition; this is another key difference between our analysis and the related ones in the  literature.

\section{The Second Algorithm and Analysis\label{sec:second}} 

In this section, we present Algorithm~\ref{alg:algo2} which also achieves the theoretical guarantees stated in Theorem~\ref{theorem:alg1}.
We can view Algorithm~\ref{alg:algo2} as a modification of Algorithm~\ref{alg:alg1} in which we fix
the barrier value  to be $1$, and set $\delta_u$ to be $0$.
Thus, in each iteration we select the vector $v$ which maximizes the determinant $\det(\I - A_j - vv^*)$.
By the Matrix Determinant Lemma (Lemma~\ref{lem:MDL}), this is equivalent to choosing $v$ to minimize the quadratic form $v^*(\I - A_j)^{-1}v$ in each iteration.  
\begin{algorithm}[th] 
\SetAlgoCaptionSeparator{}
\caption{\label{alg:algo2}}
\textbf{Input}: $\mathcal{I} = \{v_i\}_{i = 1}^m$, where $v_i \in \C^d$ and $\norm{v_i}^2 = \alpha$ \label{alg:nobarrier}\\
$A_0 \gets \mathbf{0}_{d \times d}$, $\seta_0\gets \emptyset$ \\
$\setb_0 \gets \mathcal{I}$ \\
\For{$j = 0$  to  $m/2-1$}{
    $v_j \gets \argmin_{v \in \setb_{j}} v^* (\identity - A_j)^{-1} v$ \\
    $A_{j+1} \gets A_{j} + v_j v_j^*$ \\
    $\seta_{j+1} \gets \seta_{j} \cup \{v_j\}$\\
    $\setb_{j+1} \gets \setb_{j} \setminus \{v_j\}$\\
}
\Return $\seta_{m/2}$
\end{algorithm}

Rather than considering the change in the potential function $\Phi^{1}(A_j)$ in each iteration, in Lemma~\ref{lem:nobarrierdet} we directly bound the value of
\[
    \Phi^1(A_{m/2}) = \trace \log(\I - A_{m/2})^{-1} = -\log \det(\I - A_{m/2}),
\]
where $A_{m/2}$ is the matrix corresponding to the output of   Algorithm~\ref{alg:nobarrier}.

\begin{lemma} \label{lem:nobarrierdet}
Let $A_{m/2}$ be the matrix corresponding to $\seta_{m/2}$. Then, it holds that
    \[
        \log \det(\identity - A_{m/2}) \geq - d \log(2) -\frac{2 d^2}{m}.
    \]
\end{lemma}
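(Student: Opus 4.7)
The plan is to analyse $\log\det(\I - A_{m/2})$ directly via a telescoping product, converting the greedy per-iteration minimisation of $v^*(\I - A_j)^{-1}v$ into a per-iteration lower bound on $\det(\I - A_{j+1})/\det(\I - A_j)$ via the Matrix Determinant Lemma, and then summing the resulting scalar series.

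First I would apply Lemma~\ref{lem:MDL} to the rank-one update $A_{j+1} = A_j + v_j v_j^*$ to obtain
\[
\det(\I - A_{j+1}) = \bigl(1 - v_j^*(\I - A_j)^{-1} v_j\bigr)\cdot \det(\I - A_j),
\]
so that taking logarithms and telescoping yields
\[
\log\det(\I - A_{m/2}) = \sum_{j=0}^{m/2-1} \log\bigl(1 - v_j^*(\I - A_j)^{-1} v_j\bigr).
\]

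Next I would control $v_j^*(\I - A_j)^{-1} v_j$ using the greedy choice. Since $v_j$ minimises $v^*(\I - A_j)^{-1}v$ over $v\in\setb_j$, this quadratic form is at most its average over $\setb_j$. The hypothesis $\sum_{i=1}^m v_i v_i^* = \I$ gives $\sum_{v\in\setb_j} vv^* = \I - A_j$, and hence
\[
\sum_{v\in\setb_j} v^*(\I - A_j)^{-1} v = \trace\bigl((\I - A_j)^{-1}(\I - A_j)\bigr) = d.
\]
Dividing by $|\setb_j| = m - j$ yields $v_j^*(\I - A_j)^{-1}v_j \leq d/(m-j)$. Since $j \leq m/2 - 1$ and $m \geq 49d^2$, this is strictly less than $1$, and a short induction confirms that $\I - A_{j+1} = (\I - A_j) - v_j v_j^*$ remains positive definite throughout: conjugating by $(\I - A_j)^{-1/2}$ shows the update shifts exactly one eigenvalue by the positive factor $1 - v_j^*(\I - A_j)^{-1} v_j$.

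The remainder is scalar arithmetic. Since $\log(1-x)$ is decreasing, the substitution $k = m - j$ gives
\[
\log\det(\I - A_{m/2}) \geq \sum_{j=0}^{m/2-1}\log\Bigl(1 - \frac{d}{m-j}\Bigr) = \sum_{k = m/2 + 1}^{m} \log(1 - d/k).
\]
For $m\geq 4d$ we have $d/k \leq 2d/m \leq 1/2$, so the elementary bound $\log(1-x) \geq -x - x^2$ on $[0, 1/2]$ applies, giving
\[
\sum_{k=m/2+1}^{m} \log(1 - d/k) \geq - d\cdot\sum_{k=m/2+1}^{m} \frac{1}{k} - d^2\cdot\sum_{k=m/2+1}^{m}\frac{1}{k^2}.
\]
Comparing the tails to integrals bounds these two sums by $\log 2$ and $1/m$ respectively, yielding $\log\det(\I - A_{m/2}) \geq -d\log 2 - d^2/m$, which implies the stated bound $-d\log 2 - 2d^2/m$.

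The main obstacle is mostly bookkeeping: one must simultaneously maintain positive-definiteness of $\I - A_j$ along the execution (so that $v^*(\I - A_j)^{-1}v$ is well-defined and positive) and choose a sharp enough scalar approximation to $\log(1-x)$ so that the tail sums in $1/k$ and $1/k^2$ combine into the target constants. The substantive content — that greedy minimisation produces an average-based per-step bound via the identity $\sum_{v\in\setb_j} vv^* = \I - A_j$ — is transparent once the telescoping structure is identified.
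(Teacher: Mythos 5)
Your proposal is correct and follows essentially the same route as the paper: the Matrix Determinant Lemma gives the telescoping product, the identity $\sum_{v\in\setb_j} v^*(\I-A_j)^{-1}v = \trace(\I) = d$ plus the greedy-below-average argument gives the per-step factor $1-d/(m-j)$, and the rest is a scalar estimate of $\sum_{j=0}^{m/2-1}\log\bigl(1-\tfrac{d}{m-j}\bigr)$. The only difference is the end-game — you use $\log(1-x)\geq -x-x^2$ term by term with integral bounds on the harmonic and $1/k^2$ tails, whereas the paper compares the whole sum to an exact antiderivative — and your version even yields the slightly sharper constant $d^2/m$ in place of $2d^2/m$.
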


\begin{proof}
    In each iteration of Algorithm~\ref{alg:algo2}, we have that
    \[
        \sum_{v \in \setb_{j}} v^* (\identity - A_j)^{-1} v = \trace \left((\identity - A_j)^{-1} (\identity - A_j)\right) = \trace(\identity) = d.
    \]
    Therefore,  it holds for the selected $v_j$ that
    \begin{equation} \label{eq:detupdate}
        v_j^* (\identity - A_j)^{-1} v_j \leq \frac{1}{\cardinality{B_{j}}} \sum_{v \in \setb_{j}} v^* (\identity - A_j)^{-1} v = \frac{d}{m-j}.
    \end{equation}
    By the Matrix Determinant Lemma~(Lemma~\ref{lem:MDL}), we have that
    \begin{align*}
    \det(\identity - A_{j+1}) &= (1 - v_j^*(\identity - A_j)^{-1} v_j) \det(\identity -A_j) \\
        & \geq \left(1 - \frac{d}{m-j}\right) \det(\identity - A_j) \\
        & \geq \left[\prod_{i=0}^{j} \left(1 - \frac{d}{m-i}\right)\right] \det(\identity - A_0) \\
        & = \prod_{i=0}^{j}\left(1 - \frac{d}{m-i}\right),
    \end{align*}
    where the second inequality follows by applying \eqref{eq:detupdate} a further $j$ times.
    Then, we have
    \begin{align*}
        \log \det(\I - A_{m/2}) & \geq \sum_{i = 0}^{m/2-1} \log\left(1 - \frac{d}{m-i}\right) \\
        & \geq \int_{i=0}^{\frac{m}{2}} \log\left(1 - \frac{d}{m-i}\right) \mathrm{d}i \\
        & = \int_{i=0}^{\frac{m}{2}} \log(m - i - d) \mathrm{d}i - \int_{i=0}^{\frac{m}{2}} \log(m-i) \mathrm{d}i \\
        & = \big[- (m-i-d) \log(m-i-d) + (m-i-d) + (m-i) \log(m-i) - (m-i) \big]_{i=0}^{m/2}\\
        & = \big[(m-i-d+d) \log(m-i) -(m-i-d)\log(m-i-d) - d\big]_{i=0}^{m/2}\\
        & = \left[d \log(m - i) - (m - i - d) \log\left(1 - \frac{d}{m-i}\right) - d \right]_{i=0}^{m/2} \\
        & = d \log\left(\frac{m}{2}\right) - \left(\frac{m}{2} - d\right) \log\left( 1 - \frac{2d}{m} \right) - d \log(m) + (m-d) \log\left(1 - \frac{d}{m}\right)\\
        & \geq - d \log(2) + \left(\frac{m}{2}-d\right) \frac{2d}{m} - \frac{(m-d) (d/m)}{1-(d/m)}\\
        & = -d \log(2) + d - \frac{2d^2}{m} - d \\
        & = - d \log(2) - \frac{2d^2}{m},
    \end{align*}
       where the second line follows since $\log(1- d/(m-i))$ is a decreasing function on $i$,
    the second equality follows since $\int \log(x - y) \mathrm{d}y = -(x-y)\log(x-y) + (x-y) + C$,
    and for the final inequality we use the fact that $-x/(1-x) \leq \log(1-x) \leq -x$ for all $0 \leq x < 1$.
    This completes the proof.
\end{proof}

With this bound on the determinant of $\left(\I - A_{m/2}\right)$, we are able to apply Lemma~\ref{lem:condition} to show that Algorithm~\ref{alg:algo2} also achieves the guarantee given in Theorem~\ref{theorem:alg1}.
\begin{proof}[Proof of Theorem~\ref{theorem:alg1}.]
    Since $m \alpha = d$, we have that 
    \[
        \trace(\identity - A_{m/2}) = d - \frac{m \alpha}{2} = \frac{d}{2}.
    \]
    Then, by Lemma~\ref{lem:nobarrierdet} and the condition of $m \geq 49 d^2$, we have that
    \begin{align*}\log\left[\left(\frac{d}{\trace(\identity - A_{m/2})}\right)^d \det(\identity - A_{m/2}) \right] & \geq d \log(2)  - d \log(2) - \frac{2 d^2}{m} = - \frac{2d^2}{m} \geq \log\left(\frac{24}{25}\right).
    \end{align*}
    Then, let
    \[
        x \triangleq \sqrt{1 - \left(\frac{d}{\trace(\identity - A_{m/2})}\right)^d \det(\identity - A_{m/2})} \leq \sqrt{1 - \frac{24}{25}} = \frac{1}{5}.
    \]
    By Lemma~\ref{lem:condition}, we can bound the condition number of $\identity - A_{m/2}$ as
    \[
        \kappa(\identity - A_{m/2}) \leq \frac{1 + x}{1 - x} \leq \frac{3}{2}.
    \]
    Then, we have that 
    \[
        \lambda_{\mathrm{max}}(\identity - A_{m/2}) \leq \kappa(\identity - A_{m/2}) \cdot \lambda_{\mathrm{min}}(\identity - A_{m/2}) \leq \frac{3}{2} \cdot \frac{1}{2} = \frac{3}{4},
    \]
    and
    \[
        \lambda_{\mathrm{\min}}(\identity - A_{m/2}) \geq \frac{\lambda_{\mathrm{max}}(\identity - A_{m/2})}{\kappa(\identity - A_{m/2})} \geq \frac{2}{3} \cdot \frac{1}{2} = \frac{1}{3}.
    \]
    Since $\lambda_{\mathrm{max}}(A_{m/2}) = 1 - \lambda_{\mathrm{min}}(\identity - A_{m/2})$ and $\lambda_{\mathrm{min}}(A_{m/2}) = 1 - \lambda_{\mathrm{max}}(\identity - A_{m/2})$,
    we have that
    \[
        \frac{1}{4} \leq u^* A_{m/2} u \leq \frac{2}{3},
    \]
    for all unit vectors $u \in \C^d$. The time complexity analysis of the algorithm follows the one for Algorithm~\ref{alg:alg1}.
\end{proof}

Finally, as our presented algorithms  are based on maximizing the determinant of the constructed matrix $A$ (or $\I - A$) and applying the condition number inequality~(Lemma~\ref{lem:condition}),
we discuss the difference between our technique   with a direct application of more general determinant maximisation algorithms.
Recall that
in the cardinality-based determinant maximisation problem, we are given as input a set of vectors $\mathcal{I} = \{v_i\}_{i=1}^m$ and an integer $k$, and the objective is to find
\[
    \widehat{S} = \argmax_{\substack{S \subset \mathcal{I} \\ \cardinality{S} \leq k}} \det\left(\sum_{v \in S} v v^*\right).
\]
The state-of-the-art polynomial-time approximation algorithm is presented by Singh and Xie~\cite{SX-determinant} and achieves an $\exp(d)$-approximation of the optimal solution.

By the result of Marcus et al.\ \cite{MSS}, we know that for any set $\mathcal{I} = \{v_i\}_{i=1}^m$ such that $\sum_{v \in \mathcal{I}} v v^* = \I$ and $\norm{v_i} = \alpha$ for all $i \in [m]$, there exists a partition $S_1, S_2$ of $\mathcal{I}$ such that
\[
    \norm{\frac{1}{2}\cdot \I - \sum_{i \in S_j} v_i v_i^*} \leq 3 \sqrt{\alpha}
\]
for $j \in \{1, 2\}$. Without loss of generality, let $\cardinality{S_1} \geq \cardinality{S_2}$. Then, we have that
\[
    \det\left(\sum_{i \in S_1} v_i v_i^*\right) \geq \left(\frac{1}{2} - 3\sqrt{\alpha}\right)^{d/2} \left(\frac{1}{2} + 3 \sqrt{\alpha}\right)^{d/2} = 2^{-d} \exp\left(-\Theta\left(\frac{d^2}{m}\right)\right).
\]
Thus, the guarantee on the determinant given by Lemma~\ref{lem:nobarrierdet} matches the determinant guarantee implied by \cite{MSS}. Applying Lemma~\ref{lem:condition}, we achieve a non-trivial bound on the eigenvalues of the constructed matrix when $m = \Omega\left(d^2\right)$.
On the other hand, applying the algorithm  by Singh and Xie~\cite{SX-determinant} returns a set $S$ such that $\det\left(\sum_{i \in S} v_i v_i^*\right) \geq 2^{-d} \exp\left(-O\left(d^3/m\right)\right)$, and gives a non-trivial bound on the eigenvalues of the constructed matrix only when $m = \Omega\left(d^3\right)$.

\section*{Acknowledgement}
This work is supported by    EPSRC Early Career Fellowship~(EP/T00729X/1) and 
 EPSRC Doctoral Training Studentship (2590711).  
Part of this work was done when Peter Macgregor and He Sun were visiting the Simons Institute for the Theory of Computing in Fall 2023.

\bibliographystyle{alpha}
\bibliography{references.bib}

\end{document}